\colorlet{Changes@Color}{magenta}
\newcommand{\stkout}[1]{\ifmmode\text{\sout{\ensuremath{#1}}}\else\sout{#1}\fi}
\newcommand{\RNum}[1]{\uppercase\expandafter{\romannumeral #1\relax}}
\def\maxwidth{\ifdim\Gin@nat@width>\linewidth\linewidth\else\Gin@nat@width\fi}
\def\maxheight{\ifdim\Gin@nat@height>\textheight\textheight\else\Gin@nat@height\fi}
\let\oldsubsection\subsection
\renewcommand{\subsection}[1]{\oldsubsection{#1}\mbox{}}
\let\oldsubsubsection\subsubsection
\renewcommand{\subsubsection}[1]{\oldsubsubsection{#1}\mbox{}}
\def\fps@figure{htbp}
\begin{document}
\title{Hidden-symmetry-enforced nexus points of nodal lines in layer-stacked dielectric photonic crystals}

\author{Zhongfei Xiong}\thanks{These authors contributed equally to this work.}%
\affiliation{School of Optical and Electronic Information, Huazhong University of Science and Technology, Wuhan 430074, China}

\author{Ruo-Yang Zhang}\thanks{These authors contributed equally to this work.}
\affiliation{Department of Physics, The Hong Kong University of Science and Technology, Clear Water Bay, Hong Kong, China}

\author{Rui Yu}
\affiliation{School of Physics and Technology, Wuhan University, Wuhan 430072, China}

\author{C. T. Chan}
\email{phchan@ust.hk}
\affiliation{Department of Physics, The Hong Kong University of Science and Technology, Clear Water Bay, Hong Kong, China}

\author{Yuntian Chen}
\email{yuntian@hust.edu.cn}
\affiliation{School of Optical and Electronic Information, Huazhong University of Science and Technology, Wuhan 430074, China}
\affiliation{Wuhan National Laboratory of Optoelectronics, Huazhong University of Science and Technology, Wuhan 430074, China}
\begin{abstract}
It was recently demonstrated  that the connectivities of bands emerging from zero frequency in dielectric photonic crystals are distinct from their electronic counterparts with the same space groups. We discover that, in an AB-layer-stacked photonic crystal composed of anisotropic dielectrics, the unique photonic band connectivity leads to a new kind of symmetry-enforced triply degenerate points at the nexuses of two nodal rings and a Kramers-like nodal line. The emergence and intersection of the line nodes are guaranteed by a generalized 1/4-period screw rotation symmetry of Maxwell's equations. The bands with a constant $k_z$ and iso-frequency surfaces near a nexus point both disperse as a spin-1 Dirac-like cone, giving rise to exotic transport features of light at the nexus point. We show that the spin-1 conical diffraction occurs at the nexus point which can be used to manipulate the charges of optical vortices. Our work reveals that Maxwell's equations can have hidden symmetries induced by the fractional periodicity of the material tensor components and hence paves the way to finding novel topological nodal structures unique to photonic systems.
\end{abstract}

\maketitle
\textbf{Introduction}\\[3pt]
Discovering and synthesizing symmetry-protected topological (SPT) band degeneracies, including nodal points \cite{wan2011Topological,liu2014Discoverya, xu2015Discovery,armitage2018Weyl,lu2013Weyl,lu2015Experimental,chen2016Photonic,noh2017Experimental,chang2017Multiplea,yang2018Ideal,wang2016Threedimensional,wang2017TypeII,guo2019Observation} and nodal lines (NLs) \cite{weng2015Topological,chan2016MathrmCa,fang2016Topological,kim2015Dirac,yu2015Topological,gao2018Class,zhang2018Hybrid,he2018TypeII,kawakami2017Symmetryguaranteed,yan2018Experimental,gao2018Experimental,xia2019Observation}, is a rapidly growing frontier in the field of topological materials.  The initial impetus for the area came from realizing elusive relativistic quasi-particles, \textit{e.g.} 3-dimensional (3D) Weyl and Dirac fermions, in both electronic crystalline materials~\cite{wan2011Topological,liu2014Discoverya, xu2015Discovery,armitage2018Weyl} and photonic crystals (PhCs)~\cite{lu2013Weyl,lu2015Experimental,chen2016Photonic,noh2017Experimental,chang2017Multiplea,yang2018Ideal,wang2016Threedimensional,wang2017TypeII,guo2019Observation}. Interestingly, since the crystallographic space groups impose fewer constraints on the energy bands than the continuous Poincar\'e group, more exotic multifold band crossings were found in lattice systems~\cite{bradlyn2016Dirac}, which have no counterparts in high-energy physics.   
As a remarkable example, certain space groups allow the existence of triply degenerate points in the band structures, forming either as isolated point nodes carrying monopole charges, so-called spin-1 Weyl points~\cite{bradlyn2016Dirac,saba2017Group,hu2018Topological,yang2019Topological,zhang2018DoubleWeyl}, or as nexuses connecting several NLs~\cite{zhu2016Triple,chang2017Nexus,lv2017Observation,chang2017Nexus}. On the other hand, the SPT band crossings can also be classified according to whether they are merely symmetry-allowed (accidental) or symmetry-enforced~\cite{zhang2018Topological,chan2019Symmetryenforced}. The former are only perturbatively stable, whereas the symmetry-enforced degeneracies 
are robust against any large symmetry-preserving deformations and are currently drawing more attention  due to their deterministic nature~\cite{zhang2018Topological,chan2019Symmetryenforced,xia2019Observation}.

In PhCs, the topology of band structures is usually thought to be adequately described by spinless space groups, provided that special internal symmetries, such as electromagnetic (EM) duality, are not imposed on the EM materials. However, in dielectric PhCs, there are always two gapless bands emerging from zero frequency and momentum, $\omega=|\mathbf{k}|=0$, irrespective of the space group representations at that point. Watanabe and Lu recently revealed  that this intrinsic singularity of EM fields permits higher minimal connectivity for the lowest photonic bands than for their electronic counterparts without spin-orbit coupling, and may further enforce unique photonic band crossings even in symmorphic lattices~\cite{watanabe2018Space}. This pioneering study uncovered the tip of the hidden characteristics of Maxwell's equations that are relevant to photonic band connectivities. In general, the stationary Maxwell's equations can be written as a generalized eigenvalue problem,
\begin{equation}\label{maxwell}
   \begin{pmatrix}
    0 & i\nabla\times\\
    -i\nabla\times & 0
    \end{pmatrix}
    \begin{pmatrix}
    \mathbf{E}\\ \mathbf{H}
    \end{pmatrix}
    =\omega \begin{pmatrix}
    \tensor{\varepsilon}(\mathbf{r}) & \tensor{\chi}(\mathbf{r})\\
    \tensor{\chi}(\mathbf{r})^\dagger & \tensor{\mu}(\mathbf{r})
    \end{pmatrix}
    \begin{pmatrix}
    \mathbf{E}\\ \mathbf{H}
    \end{pmatrix},
\end{equation}
where we henceforth denote the curl matrix and the constitutive matrix  on the left and right sides of Eq.~\eqref{maxwell} as $\hat{\mathcal{N}}$ and $\hat{\mathcal{M}}(\mathbf{r})$, respectively.
Since all space group transformations leave the curl matrix $\hat{\mathcal{N}}$ invariant, a PhC respects a space group symmetry $\hat{A}$, only if its constitutive tensor obeys $\hat{A}\,\hat{\mathcal{M}}(\mathbf{r})\hat{A}^{-1}=\hat{\mathcal{M}}(\mathbf{r})$. However, a generic symmetry $\widetilde{A}$ of Maxwell's equations~\eqref{maxwell} operates on the Hamiltonian $\hat{H}(\mathbf{r})=\hat{\mathcal{M}}(\mathbf{r})^{-1}\hat{\mathcal{N}}$ of EM fields, namely requiring $\widetilde{A}\hat{H}(\mathbf{r})\widetilde{A}^{-1}=\hat{H}(\mathbf{r})$, and not on $\hat{\mathcal{N}}$ and $\hat{\mathcal{M}}(\mathbf{r})$ separately. This fact implies that the conventional space groups alone are insufficient to determine the symmetry properties as well as the band connectivities of photonic systems.

\begin{figure*}[hbt]
  \includegraphics[width=0.95\textwidth]{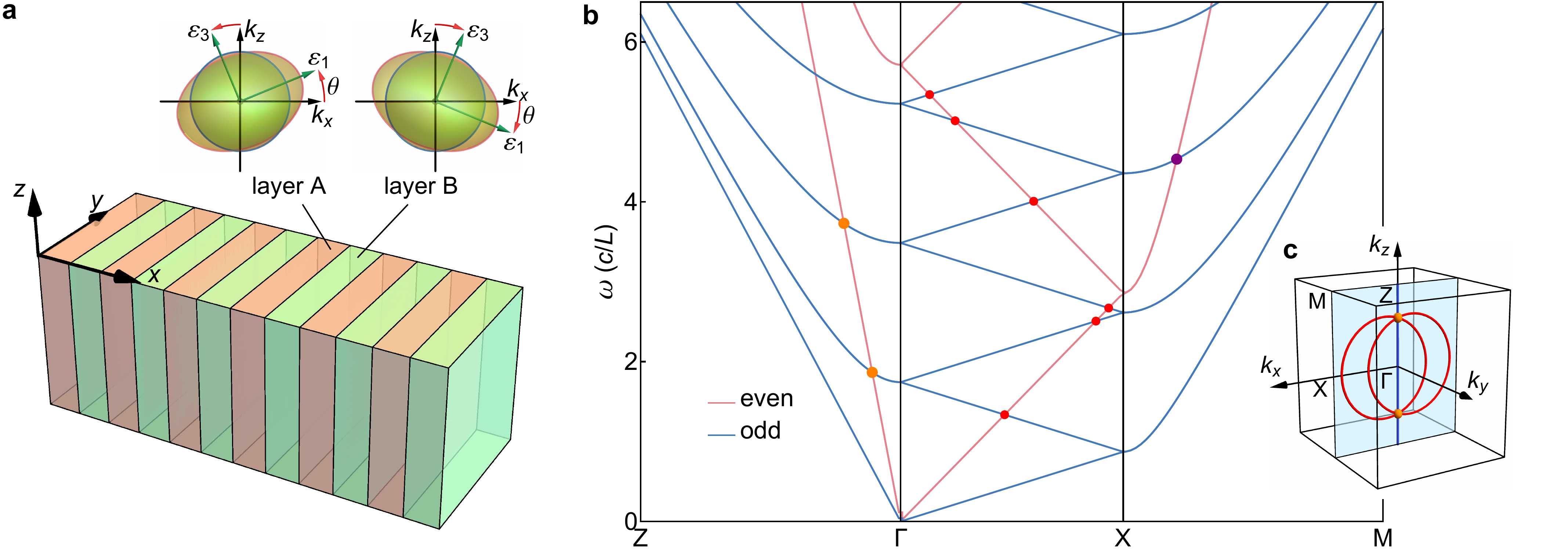}
  \caption{\label{fig1}AB-layer-stacked photonic crystal made of a generic biaxial dielectric. \textbf{a} Structure of the PhC, where the two insets display the iso-frequency surfaces in the $xz$ plane of the biaxial dielectrics in layer A (orange) and layer B (green). \textbf{b} Band structure along high symmetry lines in the $k_y=0$ plane for the PhC with $\varepsilon_1=2$, $\varepsilon_2=13$, $\varepsilon_3=1$, and rotation angle $\theta=\pi/5$. The blue and magenta lines represent the bands with odd and even $\hat{M}_y$ parities respectively. The red dots correspond to the nodal lines along which two bands with opposite $\hat{M}_y$ parities intersect. The orange and purple dots denote the threefold and fourfold degenerate nexus points respectively.  \textbf{c} Sketch map of two red nodal rings (corresponding to the two lowest red dots in \textbf{b}) crossing a Kramers-like nodal line (blue line) at two nexus points in the first (bulk) Brillouin zone of the PhC.}
  \end{figure*}

In this work, we propose a simple layer-stacked photonic structure consisting of anisotropic dielectrics to exemplify such hidden symmetries of Maxwell's equations beyond space groups. We show that a hidden symmetry, more specifically, a generalized fractional screw rotation symmetry, together with time reversal symmetry guarantees the emergence of Kramers-like straight NLs passing through the Brillouin zone (BZ) center, and results in unusual photonic band connectivities. Furthermore, we demonstrate that the lowest Kramers-like NL can almost always intersect with two other SPT nodal rings at two triply degenerate nexus points (NPs), which can be seen as a new kind of magnetic monopole connecting Berry flux strings in  momentum space~\cite{cornwall1999Center,volovik2000Monopoles,heikkila2015Nexus}. By breaking the hidden symmetry, we lift the two NPs and achieve type-II and type-III nodal rings in the PhC~\cite{gao2018Class,zhang2018Hybrid,he2018TypeII}. 
In addition, the peculiar anisotropic band structure in the vicinity of the NPs, especially the spin-1 conical dispersion of the iso-frequency surfaces, may lead to novel transport phenomena. As an example, we show how  optical vortices can be manipulated via the spin-1 conical diffraction effect~\cite{diebel2016Conical} for a beam incident at an NP.

\textbf{Results}\\[3pt]
The photonic crystal considered here consists of two types of dielectric layer (A and B) stacked periodically along the $x$ direction which are homogeneous in the transverse $yz$ plane. The A and B layers have equal thicknesses, $L/2$, and are both composed of the same kind of nondispersive anisotropic dielectric with principal relative permittivity values of $\varepsilon_1,\ \varepsilon_2,\ \varepsilon_3\,(\neq\varepsilon_1)$, whereas the optical axes of the dielectric rotate alternatively in the AB layers as shown in Fig.~\ref{fig1}a. Specifically, the second principal axis of the materials is fixed along the $y$ direction, while the first principle axis is rotated by an angle $\theta$ counterclockwise (clockwise) from the $x$ axis in layer A (B). As such, the PhC's relative permittivity tensor in $xyz$ coordinates is given by 
\begin{equation}\label{epsilon}
\tensor{\bm \varepsilon}_r= \begin{pmatrix}
\varepsilon_{xx}&0&\varepsilon_{xz}\\
0&\varepsilon_{yy}&0 \\
\varepsilon_{zx}&0&\varepsilon_{zz}\\
\end{pmatrix},
\end{equation}
where $\varepsilon_{xz}=\varepsilon_{zx}=\pm g=\pm\left(\varepsilon_1-\varepsilon_3 \right)\sin{\theta}\cos{\theta}$ flips its sign between layers A and B, while the diagonal elements $\varepsilon_{xx}=\left(\cos^2{\theta}\varepsilon_1+\varepsilon_3 \sin^2{\theta}\right)$, $\varepsilon_{yy}=\varepsilon_2$, and $\varepsilon_{zz}=\left(\sin^2{\theta}\varepsilon_1+\varepsilon_3 \cos^2{\theta}\right)$ are all constant. The band structure along high symmetry lines of the PhC is plotted in Fig.~\ref{fig1}b (see supplementary information S1 for the analytical calculation).

\textbf{\textit{Space group symmetries}}\\
The space group of the PhC is $\mathbb{R}^2\rtimes\mathrm{Rod}(22)$, \textit{i.e.} the semidirect product of the 2-dimensional continuous translational group $\mathbb{R}^2$  in the $yz$ plane and the nonsymmorphic rod group \textbf{22} ($pmcm$)~\cite{kopsky2002international} associated with discrete translations along the $x$-axis. 
Here, we focus on several space group symmetries relevant to the band crossings in the $k_y=0$ plane.

First, the PhC is invariant under the mirror reflection ($\hat{M}_y$) about the $y=\mathrm{constant}$ planes, which permits the bands with opposite (odd and even) mirror parities to intersect along NLs in the $\hat{M}_y$-invariant plane $k_y=0$~\cite{weng2015Topological,chan2016MathrmCa,fang2016Topological}, as marked by the red dots in Fig.~\ref{fig1}b and the red rings in Fig.~\ref{fig1}c (also see Fig.~\ref{fig2} for the 3D band structures).
Second, the combined inversion and time reversal symmetry ($\mathcal{PT}$) quantizes the Berry phase encircling the nodal lines as $\pi$, stabilizing the nodal lines against local $\mathcal{PT}$-preserving perturbations~\cite{kim2015Dirac,yu2015Topological,chan2016MathrmCa,fang2016Topological}. 
Third, the PhC has a  twofold screw symmetry $\hat{S}_{2x}:(x,y,z)\to (x+\frac{L}{2},-y,-z)$. Together with $\mathcal{T}$, the combined symmetry $\hat{\Theta}_{L/2}=\mathcal{T}\hat{S}_{2x}$ ensures that all Bloch states are doubly degenerate in the $k_x=\pm\pi/L$ plane (corresponding to the twofold degenerate bands along $X-M$ in Fig.~\ref{fig1}b)~\cite{wang2017TypeII}.

However, although the space group only supports 1D irreducible representations along the $\Gamma-Z$ line, the band structure shows that two bands with the same $\hat{M}_y$-parity always linearly cross along this line regardless of the dielectric parameters, and accordingly the two red nodal rings intersect at two NPs (orange dots in Fig.~\ref{fig1}c). This indicates that the PhC system possesses a symmetry beyond the crystallographic space group.

\textbf{\textit{Hidden symmetry and Kramers-like nodal lines}}\\
Since the permittivity $\tensor{\varepsilon}(\mathbf{r})$ and the Hamiltonian $\hat{H}(\mathbf{r})=\hat{\mathcal{M}}(\mathbf{r})^{-1}\hat{\mathcal{N}}$ of EM fields are generically tensors, the periodicity of the system restricts the period of each component of $\tensor{\varepsilon}(\mathbf{r})$ to a fraction $1/n$ of the full period.  As mentioned in the introduction, the space group symmetries of the PhC, \textit{e.g.}, $\hat{A}$, are entirely encoded in the space-dependent constitutive tensor as $\hat{A}\,\hat{\mathcal{M}}(\mathbf{r})\hat{A}^{-1}=\hat{\mathcal{M}}(\mathbf{r})$. However, 
a generic symmetry $\widetilde{A}$ of Maxwell's equations~\eqref{maxwell} implies that the whole Hamiltonian is invariant under $\widetilde{A}$,  \textit{i.e.}, $\widetilde{A}\hat{H}(\mathbf{r})\widetilde{A}^{-1}=\hat{H}(\mathbf{r})$, but allows $\widetilde{A}\hat{\mathcal{M}}(\mathbf{r})\widetilde{A}^{-1}\neq \hat{\mathcal{M}}(\mathbf{r})$. Here, we show that such hidden symmetry can emerge from the fractional periodicity of the dielectric components in Eq.~\eqref{epsilon}, thereby giving rise to Kramers-like NLs along $\Gamma-Z$.

As an accessible entry point, we first consider the $\hat{M}_y$-odd subsystem in the $k_y=0$ plane. 
Since the PhC is homogeneous in the $y$ direction, the $\hat{M}_y$-odd eigenstates, $\psi^\mathrm{odd}=(E_y,H_x,H_z)^\intercal$, only depend  on $\varepsilon_{yy}$. As $\varepsilon_{yy}$ is a global constant for the PhC in Fig.~\ref{fig1}, the dispersion along $\Gamma-X$ can be regarded as simple folding of a linear band, giving rise to degeneracies at $\Gamma$ and $X$. Let us consider a relaxed condition $\varepsilon_{yy}(x+L/4)=\varepsilon_{yy}(x)$. In this case, the odd subsystem has a fractional period $L/4$, hence the width of the BZ of the subsystem is quadrupled in the $x$ direction. The $\hat{M}_y$-odd band structure in the original BZ can be obtained by folding the bands in the quadruple BZ twice. 
In the quadruple BZ, the time reversal symmetry ensures that the eigenstates at $\pm k_x$ have degenerate eigenfrequencies $\omega(k_x)=\omega(-k_x)$. After band folding, every pair of eigenstates with identical frequencies at $k_x=\pm2\pi/L$ is shifted onto the same point along the central line $k_x=0$. Consequently, the $(4m+2)^\mathrm{th}$ and $(4m+3)^\mathrm{th}$ $\hat{M}_y$-odd bands ($m\geq0$ is an integer) are degenerate along $\Gamma-Z$, as shown in Fig.~\ref{fig1}b.

\begin{figure*}[tbh]
  \includegraphics[width=0.9\textwidth]{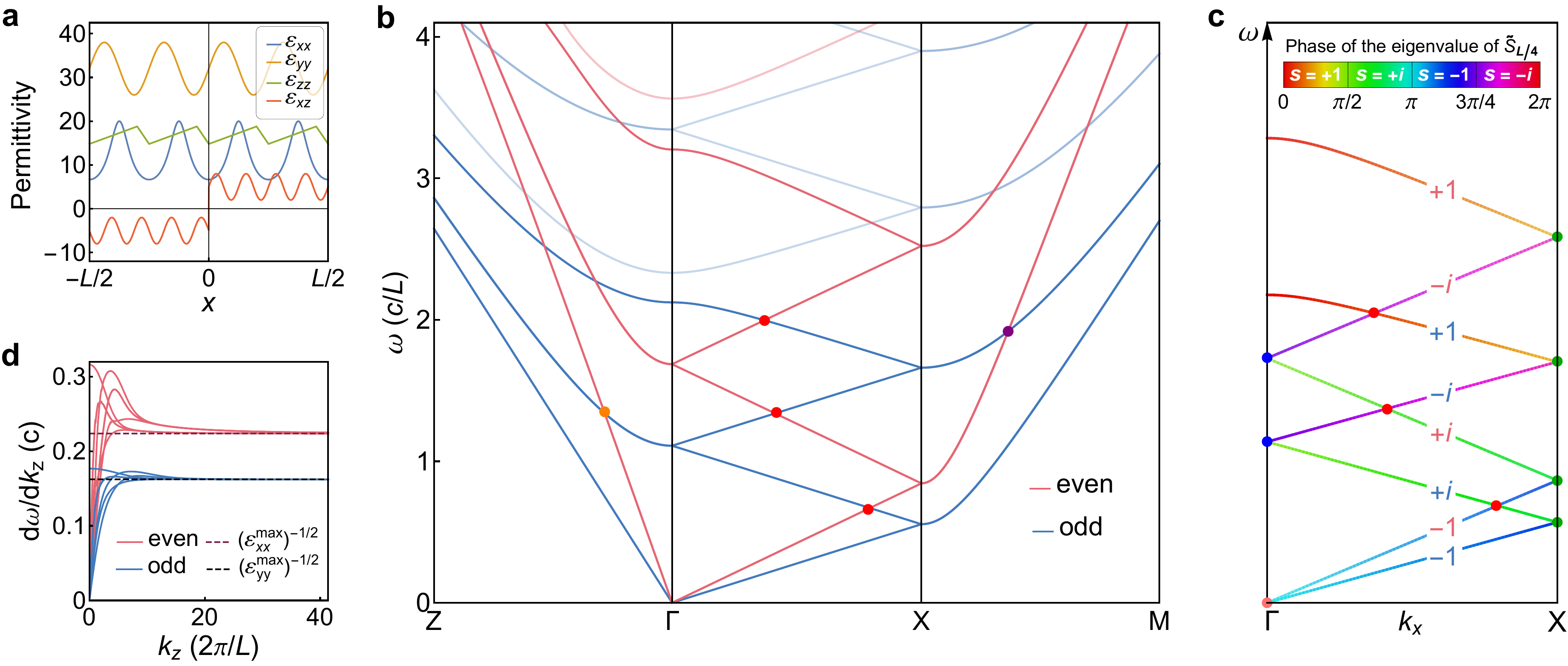}\caption{\label{connect} Consequence of the hidden symmetry $\widetilde{S}_{L/4}$ for band connectivity. \textbf{a} Profiles of the nonzero permittivity components in one period for a PhC obeying Eqs.~\eqref{condition1} and \eqref{condition2}. \textbf{b} Band structure along high symmetry lines for the PhC with the permittivity given by \textbf{a}. \textbf{c} Connectivity of the lowest four $\hat{M}_y$-even and lowest four $\hat{M}_y$-odd bands along $\Gamma-X$. The color at a point on the bands displays the phase of the eigenvalue of $\widetilde{S}_{L/4}$ at that point. The labels, $\pm1$, $\pm i$, denote the branch indices, $s$, of the bands.  \textbf{d} Group velocities $d\omega/d k_z$ of the even (magenta) and odd (blue) bands changing along $\Gamma-Z$, which converge to the asymptotic values $c/\sqrt{\varepsilon^\mathrm{max}_{xx}}$ and $c/\sqrt{\varepsilon^\mathrm{max}_{xx}}$ (two dashed lines), respectively,  as $k_z\to \infty$. }
\end{figure*}

Even though the $\hat{M}_y$-even subsystem in the $k_y=0$ plane, characterized by the submatrix $\begin{pmatrix}\varepsilon_{xx} &\varepsilon_{xz}\\\varepsilon_{zx}&\varepsilon_{zz}\end{pmatrix}$ of $\tensor{\varepsilon}_r$, has the same primitive period $L$ as the whole system, it can be demonstrated that the Hamiltonian of the subsystem with respect to the eigenvector ${\psi}'^{\mathrm{even}}=(D_x,E_z,H_y)^\intercal$  will only depend explicitly on $\varepsilon_{xx}$, $\varepsilon_{zz}$, and $\varepsilon_{xz}^2$ after a local $U(1)$ gauge transformation $\hat{U}(x,k_z)=\exp\left[ik_z\int_0^x\frac{\varepsilon_{xz}(\xi)}{\varepsilon_{xx}(\xi)}d\xi\right]$ (see supplementary information S2). For the AB-layer-stacked PhC in Fig.~\ref{fig1}, $\varepsilon_{xx}$, $\varepsilon_{zz}$, and $\varepsilon_{xz}^2$ are all constant, therefore the intersections of $\hat{M}_y$-even bands along $\Gamma-Z$ also result from the folding of a linear band. If we relax the constraint on the three parameters from being homogeneous to having a fractional period $L/n$, then band crossings along $\Gamma-Z$ can still exist. In fact, 4 is the minimum value of $n$ that maintains the space group $\mathbb{R}^2\rtimes\mathrm{Rod}(22)$ of the layer-stacked PhC and preserves the appearance of the NLs along $\Gamma-Z$. More specifically, the elements of the permittivity tensor should satisfy
\begin{gather}
    \varepsilon_{ii}(x+L/4)=\varepsilon_{ii}(x)\quad (i=x,y,z),\label{condition1}\\
    \hspace{-5pt}\varepsilon_{xz}(x+L/4)^2=\varepsilon_{xz}(x)^2\ \text{and}\ \varepsilon_{xz}(x+L/2)=-\varepsilon_{xz}(x),\label{condition2}
\end{gather}
where the second requirement in Eq.~\eqref{condition2} is necessary to keep the primitive period of the PhC of $L$.

If we focus on the minimum requirement case of $L/4$, 
we can introduce a generalized 1/4-period twofold screw operator, and prove that the complete Hamiltonian of both $\hat{M}_y$-even and odd subsystems in the $k_y=0$ plane is invariant under the generalized $1/4$-period twofold screw rotation (see supplementary information S2), 
\begin{equation}
   \widetilde{S}_{L/4}\hat{H}(k_y=0)\widetilde{S}_{L/4}^{-1}=\hat{H}(k_y=0).
\end{equation}
The generalized $1/4$-period twofold screw rotation about the $x$-axis is defined as
\begin{equation}
    \widetilde{S}_{L/4}=\big(\hat{P}_-+\hat{G}^{-1}\hat{U}^\dagger\hat{P}_+\big)
    \hat{C}_{2x}\hat{T}_x({\textstyle\frac{L}{4}})
    \big(\hat{P}_-+\hat{G}\hat{U}\hat{P}_+\big),
\end{equation}
where $\hat{C}_{2x}$ is the twofold rotation about the $x$-axis, $\hat{T}_x(\frac{L}{4})$ denotes the $1/4$-period translation along $x$ direction, $\hat{P}_\pm=\frac{1}{2}(\hat{I}_{6\times6}\pm\hat{M}_y)$ are the projection operators onto $\hat{M}_y$-even/odd subsystems, $\hat{U}$ is the aforementioned local $U(1)$ gauge transformation, and 
\begin{equation}
    \hat{G}=\hat{I}_{6\times6}+(\varepsilon_{xx}(x)-1)\hat{\mathbf{e}}_1\hat{\mathbf{e}}_1+\varepsilon_{xz}(x)\hat{\mathbf{e}}_1\hat{\mathbf{e}}_3
\end{equation}
transforms the eigenvector from $\Psi=(\mathbf{E},\mathbf{H})^\intercal$ to ${\Psi}'=(D_x,E_y,E_z,\mathbf{H})^\intercal$ with the basis $(\hat{\mathbf{e}}_i)_j=\delta_{ij}\ (i,j=1,\cdots,6)$.  In Fig.~\ref{connect}, we illustrate a more general example of PhCs satisfying Eqs.~\eqref{condition1} and (\ref{condition2}). The corresponding profiles of the dielectric components and the photonic band structure are shown in Figs.~\ref{connect}a and b, respectively.

The generalized $1/4$-period screw rotation operator is pseudo-unitary~\cite{mostafazadeh2004Pseudounitary} and obeys $\big(\widetilde{S}_{L/4}\big)^4=\hat{T}_x(L)$, thus its eigenvalue for a Bloch state $\Psi(k_x,0,0)$ on the $k_x$-axis should be a fourth root of $e^{ik_x L}$:
$\widetilde{S}_{L/4}\Psi^{(s)}(k_x,0,0)=s e^{i\frac{k_xL}{4}}\Psi^{(s)}(k_x,0,0)$, where the branch index $s=\pm1,\ \pm i$ classifies the bands in the $k_y=0$ plane into 4 groups. 
Combining time reversal symmetry $\mathcal{T}$ with $\widetilde{S}_{L/4}$, we obtain the pseudo-antiunitary symmetry of the PhC $\hat{\Theta}_{L/4}=\mathcal{T}\,\widetilde{S}_{L/4}$ 
and have (see supplementary information S3)
\begin{gather}
    \hat{\Theta}^2_{L/4}\Psi^{(s)}(0,0,k_z)=s^2\,\Psi^{(s)}(0,0,k_z),\label{Kramers1}
\end{gather}
for the Bloch states on the $k_x=k_y=0$ line ($\Gamma-Z$).
Following a similar derivation as the Kramers theorem, Eq.~\eqref{Kramers1} ensures that, when $s=\pm i$, $\Psi^{(\pm i)}(0,0,k_z)$ and $\widetilde{\Psi}(0,0,k_z)=\hat{\Theta}_{L/4}\Psi^{(\pm i)}(0,0,k_z)$ are two distinct Bloch states degenerate along $\Gamma-Z$, forming Kramers-like nodal lines. Furthermore, since $\widetilde{S}_{L/4}\widetilde{\Psi}(0,0,0)=s^*\,\widetilde{\Psi}(0,0,0)=\mp i\,\widetilde{\Psi}(0,0,0)$,
each band of $s=+i$ along the $k_x$-axis has to intersect with a band of $s=-i$ at $\Gamma$ (\textit{i.e.} the blue dots in Fig.~\ref{connect}c). 

\begin{figure*}[bht]
  \includegraphics[width=0.85\textwidth]{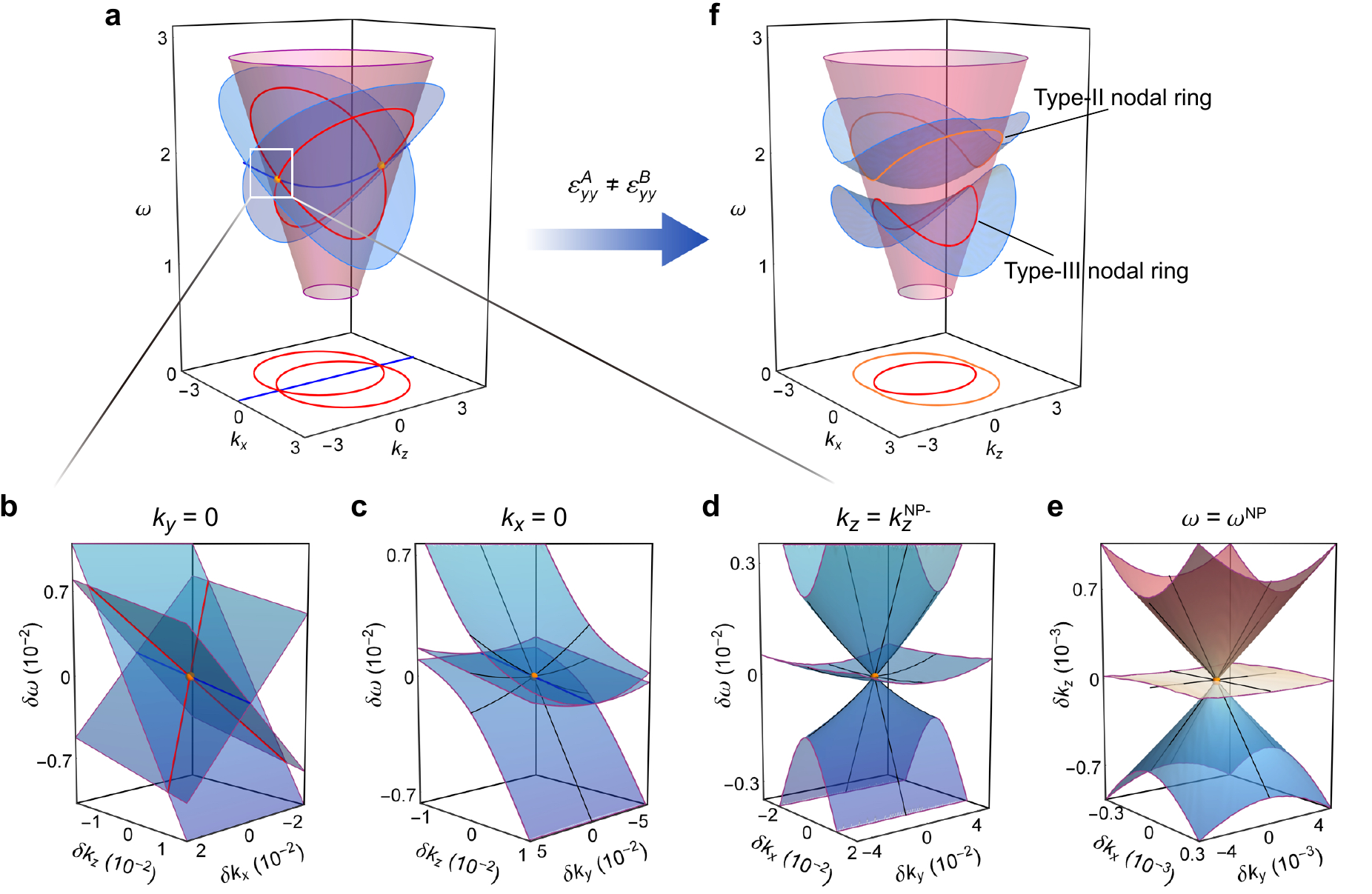}\caption{\label{fig2} Dispersion near the triple nexus points and nodal lines. \textbf{a} 3D band structure of the PhC in Fig.~\ref{fig1} on the high symmetry plane $k_y=0$ illustrating that the \nth{1} $\hat{M}_y$-even band (magenta cone) intersects with a pair of $\hat{M}_y$-odd bands (light blue surfaces) along two red nodal rings, and the two odd bands coincide along the blue Kramers-like NL. The pair of orange dots shows the nexus points of the 3 NLs. \textbf{b}, \textbf{c}, and \textbf{d} Zoomed-in band structures around the NP at $\mathbf{k}^{\mathrm{NP}_-}$ in the $k_y=0$, $k_x=0$, and  $k_z=k_z^{\mathrm{NP}_-}$ sections, respectively, where the vertical coordinate $\delta\omega=\omega-\omega^{\mathrm{NP}}$. \textbf{e} Iso-frequency surfaces around the NP at frequency $\omega^\mathrm{NP}$, where the  black tangential lines are obtained from the \nth{1} order $\mathbf{k}\cdot \mathbf{p}$ Hamiltonian. \textbf{f} Band structure of a PhC with broken hidden symmetry, where $\varepsilon_{yy}$ takes different values in layers A and B of $\varepsilon_{yy}^A=9$ and $\varepsilon_{yy}^B=21$, while all the other components of $\tensor{\varepsilon}_r$ are identical to the case in \textbf{a}.   }
\end{figure*}

At BZ boundaries $\mathbf{k}=(\pm\frac{\pi}{L},0,0)$, the combined symmetry $\hat{\Theta}_{L/2}=\mathcal{T}\hat{S}_{2x}$ guarantees that an arbitrary state $\Psi^{(s)}(\pm\frac{\pi}{L},0,0)$ with branch index $s$ is degenerate with $\widetilde{\Psi}(\pm\frac{\pi}{L},0,0)=\hat{\Theta}_{L/2}\Psi^{(s)}(\pm\frac{\pi}{L},0,0)$. 
Meanwhile, it can be proved that $\widetilde{S}_{L/4}\widetilde{\Psi}(\pm\frac{\pi}{L},0,0)=-i\,s^*\,\widetilde{\Psi}(\pm\frac{\pi}{L},0,0)$, therefore every pair of bands intersecting at the zone boundaries  $k_x=\pm\frac{\pi}{L}$ must have indices of either $s=+1$ and $-i$ or $s=-1$ and $+i$ (see the pairs of bands connecting at the green dots in Fig.~\ref{connect}c).

Fig.~\ref{connect}b exhibits an important difference of the PhC with reduced constraints compared to the PhC consisting of homogeneous layers in Fig.~\ref{fig1}: the \nth{4} and \nth{5} $\hat{M}_y$-odd (even) bands are gapped along $\Gamma-Z$. In fact, since the \nth{4} $\hat{M}_y$-odd (even) band has branch index $s=+1$, Eq.~\eqref{Kramers1} indicates that $\Psi^{(+1)}$ and $\widetilde{\Psi}=\hat{\Theta}_{L/4}\Psi^{(+1)}$ can be the same state. To achieve higher band connectivity along $\Gamma-X$, we need the components of the permittivity to have a smaller fractional period $L/n$ ($n>4$). 
The layered PhC in Fig.~\ref{fig1} can be viewed as the limiting case of infinitesimal fractional periodicity ($n\to\infty$).

\textbf{\textit{Photonic band connectivity}}\\
Dielectric PhCs have a universal feature that there are always two gapless photonic bands emerging from the singular point $\omega=|\mathbf{k}|=0$, around which the Bloch modes on the two gapless bands are transverse plane waves in the long-wavelength limit\cite{watanabe2018Space}. 
If the PhCs further meet the conditions of Eqs.~\eqref{condition1} and \eqref{condition2}, then the eigenvalues of $\widetilde{S}_{L/4}$ for the two lowest bands connected to zero frequency are both equal to $-e^{i\frac{k_xL}{4}}$, namely the first $\hat{M}_y$ even and odd bands have the same branch index $s=-1$ (see supplementary information S4). Starting from the first $\hat{M}_y$-even (odd) band along $\Gamma-X$, $\widetilde{S}_{L/4}$ symmetry ensures that at least 4 bands with branch indices $-1\rightarrow +i\rightarrow -i \rightarrow +1$ (counting from the bottom) concatenate successively at the Kramers-like degeneracies at $k_x=\frac{\pi}{L}$ and $k_x=0$, as shown in Fig.~\ref{connect}c. Consequently, 
the minimal band connectivity (MBC) along $\Gamma-X$ is 8 for bands connected to zero frequency, which is beyond the prediction ($\mathrm{MBC}=4$) made by only considering the twofold screw symmetry $\hat{S}_{2x}$~\cite{watanabe2018Space}. $\mathrm{MBC}=8$ implies that the lowest 4 even and lowest 4 odd bands inevitably intersect at least 3 times (the red dots in Figs.~\ref{connect}b,c and in Fig.~\ref{fig1}b) along the line segment $\Gamma-X$, and therefore the unique photonic band connectivity enforces the emergence of the two red nodal rings shown in Fig.~\ref{fig1}c. For bands not connected to zero frequency, $\widetilde{S}_{L/4}$ symmetry leads to $MBC=4$ along $\Gamma-X$, which is also twice the result determined solely by the space group.

In addition, Fig.~\ref{connect}b shows that all bands along $\Gamma-Z$ tend towards linear dispersion as $k_z\to \infty$. Indeed, it can be rigorously proved  that, in an $\hat{M}_y$-symmetric dielectric PhC with continuous translational symmetry along the $z$-axis, all $\hat{M}_y$-even and all $\hat{M}_y$-odd bands have identical asymptotic group velocities in the $z$ direction, respectively  (see details in supplementary information S5),
\begin{equation}
    \hspace{0pt}\lim_{k_z\to\infty}\frac{d\omega^\mathrm{even}}{dk_z}=\frac{c}{\sqrt{\varepsilon^\mathrm{max}_{xx}}},\quad\ 
    \lim_{k_z\to\infty}\frac{d\omega^\mathrm{odd}}{dk_z}=\frac{c}{\sqrt{\varepsilon^\mathrm{max}_{yy}}},
\end{equation}
where $\varepsilon^\mathrm{max}_{ii}$ denotes the maximum value of $\varepsilon_{ii}$ $(i=x,y)$ in the PhC, and $c$ is the speed of light in vacuum, as demonstrated by the numerical results in Fig.~\ref{connect}d. This result can be understood from the physical picture that the EM fields tend to concentrate in the regions of high refractive index. In the short wavelength limit ($k_z\to\infty$), all the fields will be localized at the maximal permittivity positions, and hence, only $\varepsilon^\mathrm{max}_{xx}$ and $\varepsilon^\mathrm{max}_{yy}$ determine the asymptotic dispersion. This special property indicates that, as long as $\varepsilon^\mathrm{max}_{xx}\neq\varepsilon^\mathrm{max}_{yy}$, \textit{e.g.}, $\varepsilon^\mathrm{max}_{xx}<\varepsilon^\mathrm{max}_{yy}\  (\varepsilon^\mathrm{max}_{xx}>\varepsilon^\mathrm{max}_{yy})$, the \nth{1} $\hat{M}_y$-even (odd) band has to intersect with the \nth{2} and \nth{3} $\hat{M}_y$-odd (even) bands, \textit{i.e.}, the \nth{1} $\hat{M}_y$-odd (even) Kramers NL, on $\Gamma-Z$ (the orange dot in Fig.~\ref{connect}b) owing to their different asymptotic group velocities. 

As a notable consequence, almost any layer-stacked dielectric PhC respecting $\hat{M}_y$ and the hidden symmetry $\widetilde{S}_{L/4}$ (\textit{e.g.} the AB-layer-stacked PhC in Fig.~\ref{fig1}) 
must carry a pair of triply degenerate nodes as the nexuses of the two $\hat{M}_y$-symmetry-protected nodal rings (red rings in Fig.~\ref{fig1}c) and \nth{1} Kramers-like NL (blue line  in Fig.~\ref{fig1}c) for the bands connecting to $\omega=|\mathbf{k}|=0$,  as displayed in Fig.~\ref{fig1}c. More examples with different parameters of the PhC are given in supplementary information S6, where we can see that the NPs always appear unless the asymptotic group velocities of even and odd bands are accidentally identical, namely $\varepsilon^\mathrm{max}_{xx}=\varepsilon^\mathrm{max}_{yy}$.
Nevertheless, these exceptions only form a subset of measure zero for all possible parameters of the PhCs.

We are also aware that the linear asymptotic dispersion along the $z$ direction causes infinitely many bands with opposite mirror parities to intersect as long as $\varepsilon^\mathrm{max}_{xx}\neq\varepsilon^\mathrm{max}_{yy}$, hence forming not only infinitely many threefold NPs but also infinitely many fourfold NPs (see the purple dots in Fig.~\ref{connect}c and supplementary information S6 for the typical dispersion around a fourfold NP). Hereinafter, we will focus on the lowest pair of  triply degenerate NPs in the AB-layer-stacked PhC shown in Fig.~\ref{fig1} and investigate the band dispersion near the NPs.

\textbf{\textit{Triply degenerate nexus points}}\\
Fig.~\ref{fig2}a displays the 3D band structure near the 3 nodal lines in the $k_y=0$ plane corresponding to the PhC in Fig.~\ref{fig1}, where the magenta cone, denoting the \nth{1} $\hat{M}_y$-even band,  cuts across the \nth{2} and \nth{3} $\hat{M}_y$-odd bands (two light blue surfaces) along the two nodal rings. Meanwhile, the two odd bands connect at the Kramers-like NL along $k_x=0$. As a result, the three NLs intersect at a pair of triple NPs (orange dots) with Bloch wave vectors $\mathbf{k}^\mathrm{NP_\pm}=(0,0,\pm\frac{2\pi}{L}\sqrt{\frac{\varepsilon_{xx}}{\varepsilon_{yy}-\varepsilon_{xx}}})$ and frequency $\omega^\mathrm{NP}=\frac{2\pi c}{L\sqrt{\varepsilon_{yy}-\varepsilon_{xx}}}$. 
In terms of the $\mathbf{k}\cdot\mathbf{p}$ perturbation approach, the effective Hamiltonian around the NPs up to the linear order of $\delta\mathbf{k}=(\delta k_x,\delta k_y,\delta k_z)=\mathbf{k}-\mathbf{k}^\mathrm{NP_\pm}$ is given by
\begin{equation}\label{kdotp hamiltonian}
\begin{split}
   \hspace{-5pt}\hat{H}^\pm_\mathrm{NP}=&\begin{pmatrix}
    v_x\delta k_x\pm v^\mathrm{odd}_z\delta k_z & \displaystyle\frac{-i\,v_y}{\sqrt{2}}\delta k_y & 0 \\
    \displaystyle\frac{i\,v_y}{\sqrt{2}}\delta k_y & \pm v^\mathrm{even}_z\delta k_z & \displaystyle\frac{-i\,v_y}{\sqrt{2}}\delta k_y\\
    0 & \displaystyle\frac{i\,v_y}{\sqrt{2}}\delta k_y & -v_x\delta k_x\pm v^\mathrm{odd}_z\delta k_z
    \end{pmatrix}\\
    =&v_x\hat{S}_z\delta k_x+v_y\hat{S}_y\delta k_y\pm\left[q_z\hat{Q}_{zz}+v_{z0}\hat{I}\right]\delta k_z,
\end{split}
\end{equation}
where 
$v_x$, $v_y$, $v^\mathrm{odd}_z$, and $v^\mathrm{even}_z$ are the group velocities along the corresponding directions, $\hat{S}_i$ ($i=x,y,z$) denote the spin-1 operators, $\hat{Q}_{zz}=(\hat{S}_z)^2-\sum_{i}(\hat{S}_i)^2/3$ is one of the spin-1 quadrupolar operators~\cite{hu2018Topological}, $q_z=v^\mathrm{odd}_z-v^\mathrm{even}_z$, and $v_{z0}=\frac{1}{3}(2v^\mathrm{odd}_z+v^\mathrm{even}_z)$  (see supplementary information S7 for details).

The bands around NPs present unusual anisotropic dispersion, as depicted by the band structures on different sections passing through an NP ($\mathbf{k}^{\mathrm{NP}_-}$) in Figs.~\ref{fig2}b-d. 
In the $k_y=0$ section, Fig.~\ref{fig2}b  shows that all three bands cross linearly along the NLs. 
In Fig.~\ref{fig2}c, the band structure in the $k_x=0$ section resembles the dispersion of the so-called type-II triply degenerate point~\cite{chang2017Nexus}, where two bands sandwich and contact the third band along the Kramers-like NL, and all their group velocities ($-v^\mathrm{odd}_z$ for the NL and $-v^\mathrm{even}_z$ for the singlet) are of the same sign in the $z$ direction.
Remarkably, in the $k_{z}={k}_z^\mathrm{NP_-}=-\frac{2\pi}{L}\sqrt{\frac{\varepsilon_{xx}}{\varepsilon_{yy}-\varepsilon_{xx}}}$ section, Fig.~\ref{fig2}d shows that two conical bands intersect with an almost  flat band at an NP,  manifesting as a  2D anisotropic Dirac-like cone described by the 2D spin-1 Hamiltonian $\hat{H}_\mathrm{NP}(\delta k_z=0)=v_x\hat{S}_z\delta k_x+v_y\hat{S}_y\delta k_y$ \cite{green2010Isolated,huang2011Dirac}. 
Furthermore, as shown in Fig.~\ref{fig2}e, the iso-frequency surfaces around the NP at $\omega_\mathrm{NP}$ also disperse exactly as a 2D spin-1 Dirac-like cone in the $xy$ plane, if $\delta k_z$ is regarded as a pseudo-frequency. 
This property implies that the NPs in layer-stacked PhCs can be used to realize the novel physical effects associated with 2D spin-1 dispersion.

The unique band topology of the NPs in our system demonstrates that they belong to a new kind of threefold nodal point, different from all isolated triple  points carrying topological charge~\cite{bradlyn2016Dirac,saba2017Group,hu2018Topological,yang2019Topological,zhang2018DoubleWeyl}. In fact, the charge of an NP cannot be defined, because, for an arbitrary closed surface enclosing the NP, the gap between any 2 of the 3 bands on the surface must shut at the points where the NLs between the 2 bands pierce the surface~\cite{zhu2016Triple,chang2017Nexus}. Nonetheless, since the Berry fluxes are confined on the NLs in a $\mathcal{PT}$ symmetric system, the NPs are the terminations of Berry flux strings, and can thus be regarded as a different kind of magnetic monopole, other than Weyl points, in momentum space~\cite{cornwall1999Center,volovik2000Monopoles,heikkila2015Nexus}.

\textbf{\textit{Type-II and type-III nodal rings}}\\
The simple AB-layer PhC can be the parent structures of other fancy topological features.
In Fig.~\ref{fig2}f, we let $\varepsilon_{yy}^A\neq\varepsilon_{yy}^B$ to observe the process of hidden symmetry $\widetilde{S}_{L/4}$ breaking. Without protection by $\hat{\Theta}_{L/4}=\mathcal{T}\,\widetilde{S}_{L/4}$, the Kramers-like degeneracy along $k_x=0$ between the \nth{2} and \nth{3} $\hat{M}_y$-odd bands is lifted, and the pair of NPs disappears. 
Consequently, the original crossed nodal rings split into two new isolated rings. For the upper nodal ring (orange), the two degenerate bands are significantly tilted in the  direction $\hat{\mathbf{k}}_\perp$ perpendicular to the ring such that both their perpendicular group velocities $v^\mathrm{even/odd}_\perp=\nabla_\mathbf{k}\omega^\mathrm{even/odd}\cdot\hat{\mathbf{k}}_\perp$ always have the same sign at any point on the ring, forming a  type-\RNum{2} nodal ring~\cite{gao2018Class,zhang2018Hybrid,he2018TypeII}. 
As the lower $\hat{M}_y$-odd band has a saddle-shaped dispersion, both type-\RNum{1} points (\textit{i.e.}, $v^\mathrm{odd}_\perp$ and $v^\mathrm{even}_\perp$ of opposite signs) and type-\RNum{2} points coexist on the lower nodal ring (red), and such band crossings are referred to as a type-\RNum{3} nodal ring~\cite{gao2018Class} or a hybrid nodal ring~\cite{zhang2018Hybrid}.

\begin{figure*}[ht!]
\includegraphics[width=0.8\textwidth]{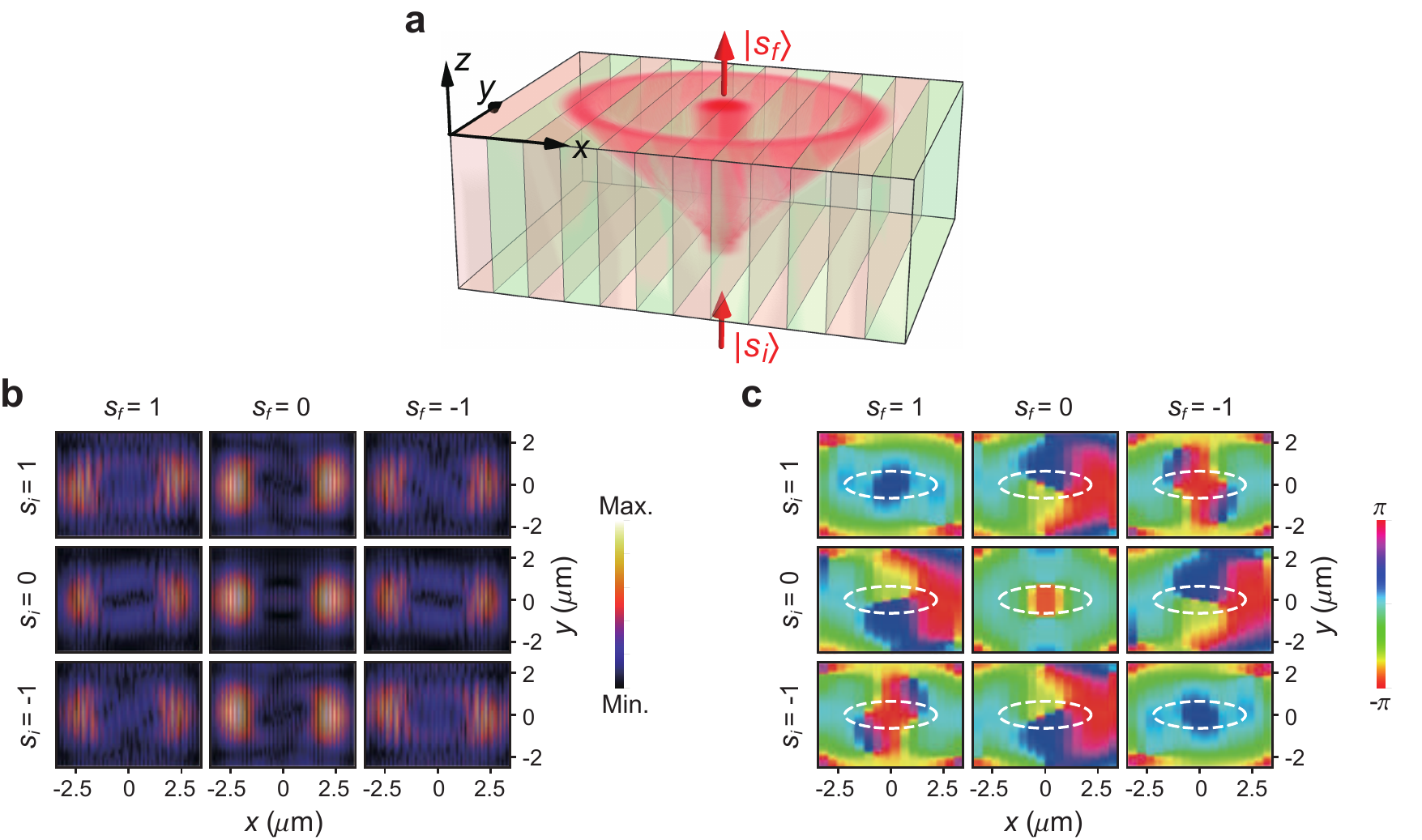}\caption{\label{fig4} Spin-1 conical diffraction at nexus point. \textbf{a} Schematic of spin-1 conical diffraction for a monochromatic beam with frequency $\omega=\omega^\mathrm{NP}$ incident on the PhC along the $z$-axis. The hollow cone illustrates the envelope of the trajectories corresponding to the wave components on the two conical iso-frequency surfaces around the NP. The straight light beam in the PhC corresponds to the wave components on the flat iso-frequency surface. 
Simulated \textbf{b} intensity and \textbf{c} phase (average of each period in the $x$ direction) distribution of different output spin components (column indices) on the horizontal $xy$ plane as the incident spin eigenstate of $\hat{S}_x$  (row indices) varies, where the parameters of the PhC are $\varepsilon_1=\varepsilon_2=12$, $\varepsilon_3=1$, $\theta=\pi/6$, and $L=0.43\,\mathrm{\mu m}$. The vacuum wavelength of the incident beam is $0.713\,\mathrm{\mu m}$. The distance between the incident and output planes is $4\,\mathrm{\mu m}$.}
\end{figure*}

\textbf{\textit{Spin-1 conical diffraction}}\\ 
It is well known that light beams travelling along the optical axes in  biaxial crystals will undergo conical diffraction~\cite{berry2007Chapter}. The conical diffraction phenomenon is actually a generic scattering effect for twofold degenerate Dirac points~\cite{peleg2007Conical}, and should also occur for light scattered by almost any linearly crossing point on the nodal rings in our system. Specifically, when an incident light beam has frequency and momentum that match a certain point on the  nodal rings, 
the refractive waves spread into a hollow cone in the PhC, and at the same time,  the polarizations circling  the cone trace out a great circle on the Poincar\'e sphere, manifesting the quantized $\pi$ Berry phase encircling the nodal ring. 

In contrast, the diffraction at the triple NPs appears strikingly different from that at other points on the NLs. Since the iso-frequency surfaces around each NP form a spin-1 Dirac-like cone (Fig.~\ref{fig2}e), the monochromatic dynamics at an NP, \textit{e.g.} $\mathbf{k}^{\mathrm{NP}_+}$, is effectively described by a Schr\"odinger equation with the 2D spin-1 Hamiltonian $\hat{H}(\delta\mathbf{k}_{xy})=v_x\hat{S}_z\delta k_x+\tilde{v}_y\hat{S}_y\delta k_y$ (here $\tilde{v}_y=\sqrt{v^\mathrm{even}_z/v^\mathrm{odd}_z}\,v_y$):
\begin{equation}\label{spin1H}
i\,v^\mathrm{odd}_z \frac{\partial}{\partial z} \left|\psi\right>=\hat{H}(\delta\mathbf{k}_{xy})\left|\psi\right>,
\end{equation} 
where the $z$-coordinate serves as pseudo-time.
Therefore, waves incident at the NPs should experience unconventional spin-1 conical diffraction~\cite{diebel2016Conical} rather than the spin-1/2-type diffraction at ordinary diabolic points.  
A schematic of spin-1 conical diffraction is shown in Fig.~\ref{fig4}a, where a light beam with frequency $\omega^\mathrm{NP}$ is incident along the $z$-axis in the PhC, and its wave vector spectrum concentrates near an NP. Since the NP is a singularity of group velocity for wave components on the two conical bands, these components will spread over a conical surface, whereas the components on the flat band will propagate straight along the $z$-axis. More interestingly, if the initial state of the beam is  an eigenstate $|s_i\rangle$ of $\hat{S}_x$ with spin quantum number $s_i\in\{-1,0,1\}$, then the spin-1 character is inherent in the transition amplitude from $|s_i\rangle$ to another eigenstate $|s_f\rangle$ of $\hat{S}_x$ as the output of the diffraction process (see supplementary information S8):
\begin{equation}\label{transition}
    \big\langle s_f \big|
    e^{-i\hat{H}(\delta\mathbf{k}_{xy})z/{v^\mathrm{odd}_z}}\big|s_i\big\rangle\propto \exp\left[i(s_f-s_i)\phi(\delta\mathbf{k}_{xy})\right],
\end{equation}
where $\phi(\delta\mathbf{k}_{xy})$ denotes the polar angle of $(v_x\delta k_x+i\,\tilde{v}_y\delta k_y)$. Eq.~\eqref{transition} shows that the phase of the output field winds $l=s_f-s_i$ times around $\delta\mathbf{k}_{xy}=0$ in momentum space. Because the trajectories of the wave components encircling $\delta\mathbf{k}_{xy}=0$ also wrap around the $z$-axis in real space, the output field projected onto $|s_f\rangle$ will generate an optical vortex on the ring-shaped section of the diffractive cone, and the charge of the vortex, $l=s_f-s_i\in\{0,\pm1,\pm2\}$, is determined by the difference in the spin quantum number between the final and initial spin states~\cite{diebel2016Conical}, which essentially reflects the conservation of the total generalized angular momentum for the spin-1 Hamiltonian (see supplementary information S8).

We have performed full-wave simulations for the 9 possible combinations of input and output spin eigenstates. The intensities and phases of the output fields on a horizontal plane are shown in Figs.~\ref{fig4}b and c, respectively, where the asymmetric intensity distributions originate from the anisotropy of the Dirac-like dispersion. In each panel of Fig.~\ref{fig4}c , the white dashed ring corresponds to the section of the diffractive cone in the geometric optics approximation, and the winding number of the phase along the ring defines the charge of the optical vortex. We can see that the winding numbers agree with the theoretical predictions $l=s_f-s_i$ in all panels, so we have demonstrated that spin-1 conical diffraction occurs in the PhC, which also indicates that the NPs can be used as a new route for studying 2D spin-1 dynamics.

\textbf{Discussion}\\[3pt]
We have discovered that a class of simple layer-stacked PhCs manifests a hidden symmetry of Maxwell's equations which directly influences the connectivity of photonic bands and engenders a pair of triply degenerate NPs where three symmetry-enforced NLs intersect. 
These photonic NPs not only are worthy of theoretical investigation as a novel kind of magnetic monopole that terminates Berry flux strings in momentum space, but also induce exotic bulk transport effects in the PhC, which may lead to prospective applications. In particular, we  found that the unusual spin-1 Dirac-like dispersion of the iso-frequency surfaces near an NP can induce spin-1 conical diffraction of optical beams, which 
can be used to generate optical vortices with a maximum topological charge of $2$. 

Our work takes the first step towards new research directions. On the one hand, the hidden symmetry of Maxwell's equations reveals a novel mechanism for realizing protected degeneracies unique to photonic bands. 
The hidden symmetry here stems from the fractional periods of different components of $\tensor{\varepsilon}_r$, which reflects a geometric property of the PhC but cannot be described by the conventional space groups. It would be fundamentally significant to develop a generalized space group theory including such symmetries for photonic systems. 
On the other hand, our proposed PhC consists of a single anisotropic dielectric material, but the band peculiarity comes entirely from the nontrivial periodic rotation of the optical axes inside the material. Although recent studies have shown that artificial gauge fields~\cite{liu2015Gauge,chen2019NonAbelian},  Pancharatnam-Berry phases~\cite{jisha2019SelfTrapping}, and synthetic spin-orbit interactions~\cite{rechcinska2019Engineering} for light can be achieved by arranging the dielectric polarization, there are still little works that study the photonic band topology of the PhCs made of anisotropic dielectrics. Our results suggest that the intrinsic material anisotropy has irreplaceable features compared with to structural anisotropy, and anisotropic dielectrics, such as liquid crystals~\cite{rechcinska2019Engineering,jisha2019SelfTrapping}, could become a platform for probing the unique topological effects of EM fields.

\textbf{Materials and Methods}\\[3pt]
The methods for calculating the band structures include the transfer matrix approach for the PhC made of homogeneous anisotropic dielectric layers, with the results shown in Figs.~\ref{fig1} and \ref{fig2} (see supplementary information S1 for details), and the plane wave expansion method for the PhC made of generic inhomogeneous dielectrics, with the results shown in Fig.~\ref{connect}. The numerical results of the spin-1 conical diffraction in Fig.~\ref{fig4} are simulated using the commercial software COMSOL Multiphysics. The input state with a certain spin quantum number is achieved by setting the field distributions on the lower boundary of the PhC as the superposition of the exact Bloch eigenstates according to the  Hamiltonian in Eq.~\eqref{kdotp hamiltonian}. The output fields projected onto the spin eigenstates are averaged along the  direction in each period (a pair of AB layers), and the phase distributions shown in Fig.~\ref{fig4}c correspond to these averaged fields.\\

\textbf{Acknowledgements}\\
We thank Profs. Yun Lai, Jie Luo, Neng Wang, and Zhao-Qing Zhang for the helpful discussions. 
This work is supported by the Natural National Science Foundation (NSFC) (Grant No. 11874026), and the Research Grants Council of Hong Kong, China (Grant Nos. AoE/P-02/12 and 16304717). 

\textbf{Data availability}\\
 The authors declare that all data supporting the findings of this study are available from the corresponding authors upon reasonable request. 
 
\textbf{Conflict of interest}\\
The authors declare that they have no conflicts of interest.

\textbf{Contributions}\\
Y.C., and R.-Y. Z. conceived the original idea. Z.X. discovered the triple NPs. R.-Y.Z., Z.X., Y.C., and R.Y. developed the theory. Z.X. R.-Y.Z., and Y.C. carried out the numerical simulations. R.-Y.Z., Z.X., Y.C., and C.T.C wrote the manuscript. C.T.C. and Y.C. supervised the project. All authors were involved in the analysis and discussion of the results.\\


\def\bibsection{\section*{}}
\textbf{References}\\
\vspace{-50pt}
\bibliographystyle{naturemag}
\bibliography{references}

\end{document}


\title{Supplementary information for ``Hidden-symmetry-enforced nexus points of nodal lines in layer-stacked dielectric photonic crystals''}





\maketitle

\tableofcontents

\newpage
\section{Calculating Band structure using transfer matrix approach}
In this section, we derive the analytical expressions of the band structures in the $k_y=0$ plane for the AB-layer-stacked PhC. For a given angular frequency $\omega$ and a wavevector $\bm{\kappa}=(\kappa_x,0,k_z)$ in the $k_y=0$ plane, there are two plane wave eigensolutions in the homogeneous anisotropic medium described by Eq.~(2) in the main text. The two plane wave solutions can be labeled by their $\hat{M}_y$ parities, and their dispersion relations are given by 
\begin{align}
    \hat{M}_y-\text{odd}:&\quad \kappa_x^2+k_z^2-\varepsilon_{yy}k_0^2=0 &\Rightarrow\quad &\kappa^{\mathrm{odd}}_{x\pm}=\pm\kappa_1=\pm\sqrt{\varepsilon_{yy}k_0^2-k_z^2}, &\\
    \hat{M}_y-\text{even}:&\quad
    \varepsilon_{xx} \kappa_x^2+2\varepsilon_{xz}\kappa_xk_z+\varepsilon_{zz} k_z^2-\varepsilon_1\varepsilon_3k_0^2=0
     &\Rightarrow\quad &\kappa^{\mathrm{even}}_{x\pm}=\pm\kappa_2-\frac{\varepsilon_{xz}}{\varepsilon_{xx}}k_z=\pm\frac{\sqrt{\varepsilon_1\varepsilon_3(\varepsilon_{xx}k_0^2-k_z^2)}}{\varepsilon_{xx}}-\frac{\varepsilon_{xz}}{\varepsilon_{xx}}k_z, &
    \end{align}
    where $k_0=\omega/c$, $\varepsilon_1\varepsilon_3\equiv\varepsilon_{xx}\varepsilon_{zz}-\varepsilon_{xz}^2$, and we note that $\varepsilon_{xz}=\pm g$ in layer A and layer B respectively. The corresponding eigenvectors are
    \begin{align}
        \hat{M}_y-\text{odd}:&\quad \phi^\mathrm{odd}_\pm=(E_y,H_x,H_z)^\intercal=\left(\sqrt{\varepsilon_{yy}}k_0,-k_z,\pm\kappa_1\right)^\intercal \exp\left[i(\pm\kappa_1x+k_zz)\right],\label{planewave odd}\\
        \hat{M}_y-\text{even}:&\quad \phi^\mathrm{even}_\pm=(E_x,E_z,H_y)^\intercal=\left(\pm\frac{\varepsilon_{xz}\kappa_2}{\varepsilon_1\varepsilon_3}+\frac{k_z}{\varepsilon_{xx}},\mp\frac{\varepsilon_{xx}\kappa_2}{\varepsilon_1\varepsilon_3},k_0\right)^\intercal \exp\left[i(\kappa^\mathrm{even}_{x\pm}x+k_zz)\right].\label{planewave even}
    \end{align} 
    The Bloch eigenfunctions with certain $\hat{M}_y$ parity in the first period of the PhC ($x\in [-L/2,L/2]$\,) can be expressed as the superpositions of the plane wave fields in Eqs.~\eqref{planewave odd} and \eqref{planewave even}:
    \begin{align}
        \hat{M}_y-\text{odd}:&\qquad \psi^\mathrm{odd}_{\alpha}=a_\alpha\,\phi^{\mathrm{odd}}_{+,\alpha}  + b_\alpha\,\phi^{\mathrm{odd}}_{-,\alpha} ,\\
        \hat{M}_y-\text{even}:&\qquad \psi^\mathrm{even}_{\alpha}= c_\alpha\,\phi^{\mathrm{even}}_{+,\alpha}  + d_\alpha\,\phi^{\mathrm{even}}_{-,\alpha} .
    \end{align}
    where $\alpha=A,B$ labels the fields in layer A ($x\in [-L/2,0]$\,) and layer B ($x\in [0,L/2]$\,) respectively. And according to the Bloch condition, the field in the $m^\mathrm{th}$ period is given by $\psi_\alpha(x+mL)=\psi_\alpha(x) e^{imL\,k_x}$ with $k_x$ denoting the $x$ component of the Bloch wavevector inside the \nth{1} Brillouin zone (BZ).
    
    From the continuity conditions of $(E_y,H_z)$ and $(E_z,H_y)$ at the intracell interface  $x=0$, we obtain 
    \begin{equation}
    \begin{pmatrix}
    1&1&0&0\\
    1&-1&0&0\\
    0&0&-1&1\\
    0&0&1&1\\
    \end{pmatrix}
    \begin{pmatrix}
    a_A\\b_A\\c_A\\d_A\\
    \end{pmatrix}=
    \begin{pmatrix}
    1&1&0&0\\
    1&-1&0&0\\
    0&0&-1&1\\
    0&0&1&1\\
    \end{pmatrix}
    \begin{pmatrix}a_B\\b_B\\c_B\\d_B\\
    \end{pmatrix},
    \end{equation}
    which shows that $(a_A,b_A,c_A,d_A)^\intercal=(a_B,b_B,c_B,d_B)^\intercal$.
    Similarly, from the continuity boundary conditions at the intercell interface $x=L/2$, we obtian
    \begin{equation}
    \begin{pmatrix}
    e^{i\kappa_{1}L/2}&e^{-i\kappa_{1}L/2}&0&0\\
    e^{i\kappa_{1}L/2}&-e^{-i\kappa_{1}L/2}&0&0\\
    0&0&-e^{i\tilde{\kappa}_{+}L/2}&e^{-i\tilde{\kappa}_{-}L/2}\\
    0&0&e^{i\tilde{\kappa}_{+}L/2}&e^{-i\tilde{\kappa}_{-}L/2}\\
    \end{pmatrix}
    \begin{pmatrix}
    a_B\\b_B\\c_B\\d_B\\
    \end{pmatrix}
    =e^{iL\,k_x}
    \begin{pmatrix}
    e^{-i\kappa_{1}L/2}&e^{i\kappa_{1}L/2}&0&0\\
    e^{-i\kappa_{1}L/2}&-e^{i\kappa_{1}L/2}&0&0\\
    0&0&-e^{-i\tilde{\kappa}_{-}L/2}&e^{i\tilde{\kappa}_{+}L/2}\\
    0&0&e^{-i\tilde{\kappa}_{-}L/2}&e^{i\tilde{\kappa}_{+}L/2}\\
    \end{pmatrix}\begin{pmatrix}
    a_A\\b_A\\c_A\\d_A\\
    \end{pmatrix},
    \end{equation}
    where $\tilde{k}_\pm=\kappa_2\pm g k_z /\varepsilon_{xx}$. As a result, the transformer matrices for $\hat{M}_y$-odd and even modes can be written as
    \begin{align}
    \hat{M}_y-\text{odd}:&\qquad\begin{pmatrix}
     e^{i\kappa_{1}L}&0\\
    0&e^{-i\kappa_{1}L}\\
    \end{pmatrix}\begin{pmatrix}
    a_A\\b_A
    \end{pmatrix}=e^{iL\,k_x}\begin{pmatrix}
    a_A\\b_A
    \end{pmatrix},\\
    \hat{M}_y-\text{even}:&\qquad\begin{pmatrix}
     e^{i\kappa_{2}L}&0\\
    0&e^{-i\kappa_{2}L}\\
    \end{pmatrix}\begin{pmatrix}
    c_A\\d_A\\
    \end{pmatrix}=e^{iL\,k_x}\begin{pmatrix}
    c_A\\d_A\\
    \end{pmatrix}.
    \end{align}
    Solving the two equations, we find that the Bloch wavevector in the $x$ direction takes the simple expression $k_x=(\pm\kappa_1\ \mathrm{mod}\ 2\pi)$ for odd modes and $k_x=(\pm\kappa_2\ \mathrm{mod}\ 2\pi)$ for even modes. Therefore, the dispersions of $\hat{M}_y$-odd and even bands in the $k_y=0$ plane read
    \begin{align}\label{bandodd}
    \hat{M}_y-\text{odd}:&\qquad({\omega}_m^\mathrm{odd})^2/c^2  =\frac{1}{\varepsilon_{yy}}\left[\left(k_x+m\frac{2\pi}{L}\right)^2+k_z^2\right], \\
    \hat{M}_y-\text{even}:&\qquad({\omega}_m^\mathrm{even})^2/c^2  =\frac{\varepsilon_{xx}}{\varepsilon_1\varepsilon_3}\left(k_x+m\frac{2\pi}{L}\right)^2+\frac{1}{\varepsilon_{xx}} k_z^2,\label{bandeven}
    \end{align}
    with $m\in\mathbb{Z}$ numbering the bands. And the corresponding normalized Bloch states are
    \begin{align}\label{eigenfieldodd}
    &\ \hat{M}_y-\text{odd}:\quad
    \psi_m^\mathrm{odd}  =\left(E_y,H_z,H_z\right)^\intercal=\frac{1}{\sqrt{2\varepsilon_{yy} L}}\left(1,-\frac{ck_{z}}{\omega^\mathrm{odd}_m},\frac{c(k_x+m\frac{2\pi}{L})}{\omega^\mathrm{odd}_m}\right)^\intercal \exp\left[i\big(k_x+m\frac{2\pi}{L})x+ k_{z}z\big)\right],\\
    &\begin{aligned}\label{eigenfieldeven}
    \hat{M}_y-\text{even}:\quad \psi_m^{even}  =\left(E_x,E_z,H_y\right)^\intercal =&\,\frac{c}{\sqrt{2L}\,\omega^\mathrm{even}_m}\left(\frac{\varepsilon_{xz}(k_x+m\frac{2\pi}{L})}{\varepsilon_1\varepsilon_3}+\frac{ k_{z}}{\varepsilon_{xx}},-\frac{\varepsilon_{xx}(k_x+m\frac{2\pi}{L})}{\varepsilon_1\varepsilon_3},\frac{\omega^\mathrm{even}_m}{c}\right)^\intercal\\
    &\cdot\exp\left[i \left((k_x+n\frac{2\pi}{L})-\frac{\varepsilon_{xz}}{\varepsilon_{xx}}k_{z} \right){x}+i k_{z}z\right],
    \end{aligned}
    \end{align}
    Eqs.~\eqref{bandodd} and \eqref{bandeven} exhibit that all $\hat{M}_y$-odd (even) bands have identical conical dispersions up to a translation along the $x$-axis in the extended BZ. Therefore, any pair of bands with opposite mirror parities will cross each other along a nodal ling in the $k_y=0$ plane. We note that the nodal line can be any kinds of conic sections, including ellipse, parabola, and hyperbola, depending on the parameters of the PhC. In particular, the two red nodal rings of our interest in the main text correspond to the intersections of the even band of $m^\mathrm{even}=0$ and the odd bands of $m^\mathrm{odd}=\pm1$, given by the following implicit equation: 
    \begin{equation}
        \frac{\varepsilon_{xx}-\varepsilon_{yy}}{\varepsilon_{xx}\varepsilon_{yy}}k_z^2=\frac{\varepsilon_{xx}}{\varepsilon_1\varepsilon_3}k_x^2-\frac{1}{\varepsilon_{yy}}\left(k_x^2\pm\frac{2\pi}{L}\right)^2.
    \end{equation}
    And from the Eqs.~\eqref{bandodd} and \eqref{bandeven} , we can also obtain the frequency and wavevectors of the pair of triply degenerate crossing points of the two nodal rings, \textit{i.e.} the triply degenerate nexus points: $\omega^\mathrm{NP}=\frac{2\pi c}{L\sqrt{\varepsilon_{yy}-\varepsilon_{xx}}}$ and $\mathbf{k}^\mathrm{NP_\pm}=(0,0,\pm\frac{2\pi}{L}\sqrt{\frac{\varepsilon_{xx}}{\varepsilon_{yy}-\varepsilon_{xx}}})$. 
    
    \section{Hidden symmetries in the $k_y=0$ plane}
    In this section, we discuss the hidden symmetry in the $k_y=0$ plane induced by the fractional periodicity of the components of constitutive tensors. In this subsystem, the Maxwell's equations in the layer-stacked dielectric PhC can be written as
    \begin{equation}\label{maxwell}
    \underbrace{
         \left(
        \begin{array}{c@{}|c@{}}
       \mbox{\large0} &    \begin{smallmatrix}
            0 & k_z & 0 \\[2pt]
            -k_z & 0 & -i\partial_x\\[2pt]
            0 & i\partial_x & 0 \rule[-1ex]{0pt}{2ex}
          \end{smallmatrix} \\\hline
           \begin{smallmatrix}\rule{0pt}{2ex}
             0 & -k_z & 0 \\[2pt]
            k_z & 0 & i\partial_x\\[2pt]
            0 & -i\partial_x & 0 \rule[-1ex]{0pt}{2ex}
          \end{smallmatrix}\quad   &  \mbox{\large0}
        \end{array} 
    \right)
    }_{\displaystyle\hat{\mathcal{N}}}
    \underbrace{\begin{pmatrix}
        \mathbf{E}\\ \mathbf{H}
        \end{pmatrix}}_{\displaystyle\Psi}
        =\omega 
    \underbrace{\begin{pmatrix}
        \varepsilon_0\tensor{\varepsilon}_r(x) & 0)\\
        0 & \mu_0\hat{I}_{3\times3}
        \end{pmatrix}}_{\displaystyle\hat{\mathcal{M}}(x)}
        \begin{pmatrix}
        \mathbf{E}\\ \mathbf{H}
        \end{pmatrix},
    \end{equation}
    where $\tensor{\varepsilon}_r$ takes the form of Eq.~(2) in the main text. And hereinafter, we adopt the natural units with $\varepsilon_0=\mu_0=1$ for convenience.  In general, the Maxwell's equations are invariant under a symmetry transformation $\widetilde{A}$, as long as 
    \begin{equation}\label{generic symmetry}
        \widetilde{A}\hat{\mathcal{N}}\widetilde{A}^{-1}=\hat{C}\hat{\mathcal{N}}\quad \text{and}\quad \widetilde{A}\hat{\mathcal{M}}\widetilde{A}^{-1}=\hat{C}\hat{\mathcal{M}},
    \end{equation}
    where $\hat{C}$ can be an arbitrary invertible operator. However, for space group symmetries of the structure, $\hat{C}$ is fixed as identity. If the constitutive tensor $\hat{\mathcal{M}}$ is invertible, which is always true for dielectric PhCs, Eq.~\eqref{generic symmetry} is equivalent to the invariance of the effective Hamiltonian $\hat{H}=\hat{\mathcal{M}}^{-1}\hat{\mathcal{N}}$, \textit{i.e.} $\widetilde{A}\hat{H}\widetilde{A}^{-1}=\hat{H}$. In the $k_y=0$ plane, the effective Hamiltonian reads
    \begin{equation}\label{hamiltonian}
        \hat{H}(k_y=0)=\hat{\mathcal{M}}^{-1}\hat{\mathcal{N}}=
        \left(\begin{array}{c|c}
        0 & \hat{H_1}\\\hline
        \hat{H}_2 & 0
        \end{array}\right)=
    \left(\begin{array}{c@{}|c@{}}
       \mbox{\large0} &    \begin{smallmatrix}
            0 & \frac{\varepsilon_{xz}}{\varepsilon_1\varepsilon_3}(-i\partial_x)+\frac{\varepsilon_{zz}}{\varepsilon_1\varepsilon_3}k_z & 0 \\[2pt]
            -\frac{k_z}{\varepsilon_{yy}} & 0 & \frac{1}{\varepsilon_{yy}}(-i\partial_x)\\[2pt]
            0 & \frac{\varepsilon_{xx}}{\varepsilon_1\varepsilon_3}(i\partial_x)-\frac{\varepsilon_{xz}}{\varepsilon_1\varepsilon_3}k_z & 0 \rule[-1ex]{0pt}{2ex}
          \end{smallmatrix} \\\hline
           \begin{smallmatrix}\rule{0pt}{2ex}
             0 & -k_z & 0 \\[2pt]
            k_z & 0 & i\partial_x\\[2pt]
            0 & -i\partial_x & 0 \rule[-1ex]{0pt}{2ex}
          \end{smallmatrix}\quad   &  \mbox{\large0}
        \end{array} 
    \right).
    \end{equation}
    In what follows, we show that the fractional periodicity of the elements of $\tensor{\varepsilon}_r$ can give rise to a hidden symmetry of $\hat{H}(k_y=0)$ beyond space groups.

    \subsection{Hidden symmetries in the $\hat{M}_y$-odd subspace}
    In the $\hat{M}_y$-odd subspace, since the electric field is polarized in the $y$ direction, the $\hat{M}_y$-odd  band structure is entirely determined by $\varepsilon_{yy}$. Especially, for the AB-layered PhC in Fig.~1 of the main text, $\varepsilon_{yy}$ is a global constant, and hence the band structure of the odd modes on the  $k_y=0$ plane is directly obtained by folding the light cone in a homogeneous medium with a constant permittivity. Consequently, all $\hat{M}_y$-odd bands are twofold degenerate along $\Gamma-Z$ except for the one connected to $\omega=|\mathbf{k}|=0$ point.   
    
    In fact, we can relax the condition of a constant $\varepsilon_{yy}$ so that the period of $\varepsilon_{yy}$ is a fraction $1/N$ of the primitive period $L$ of the whole PhC, namely $\varepsilon_{yy}(x+L/N)=\varepsilon_{yy}(x)$, the width of the genuine BZ of the $\hat{M}_y$-odd subspace (marked as BZ(odd)) should be $2N\pi/L$ which is $N$ times as large as the primitive BZ of the whole system (marked as BZ(whole)). Therefore, the finial band structure in BZ(whole) is obtained by translating the bands in BZ(odd) periodically with spacing $\Delta x=2\pi/L$. As a result, twofold degenerate Kramers-like NLs can appear along $\Gamma-Z$ as long as $N\geq3$. If we also require the PhC respects the space group $\mathbb{R}^2\rtimes \mathrm{Rod}(22)$, $N$ should be an even number and thus the minimal value of $N$ is 4. 
    We introduce the 1/4-period translation operator in the $\hat{M}_y$-odd subspace:
    \begin{equation}\label{translatonodd}
        \hat{T}_x^\mathrm{odd}(L/4)=\hat{T}_x(L/4)\hat{P}_-+\hat{P_+},
    \end{equation}
    where $\hat{T}_x(L/4)$ denotes the original translation operator, and $\hat{P}_\pm=\frac{1}{2}(\hat{I}\pm\hat{M}_y)$ is the projection operator onto $\hat{M}_y$-even(+)/odd(-) subspace. So the fractional periodicity of $\varepsilon_{yy}$ is equivalent to $\hat{T}_x^\mathrm{odd}(L/4)\hat{H}(k_y=0)\hat{T}_x^\mathrm{odd}(L/4)^{-1}=\hat{H}(k_y=0)$. At the same time, the $\hat{M}_y$-odd subsapce is also invariant under a generalized 1/4-period twofold screw operation about the $x$ axis:
    \begin{equation}\label{screwodd}
        \widetilde{S}^\mathrm{odd}_{L/4}=\left(\hat{C}_{2x}\hat{T}_x(L/4)\right)\hat{P}_-+\hat{P}_+,
    \end{equation}
    namely, the combination of a 1/4-period translation and a twofold rotation about the $x$ axis acting on the odd subspace. We will prove in the next section that the generalized 1/4-period twofold screw symmetry, together with the time reversal symmetry, guarantees the emergence of the Kramers-like NLs along $\Gamma-Z$.
    
    \subsection{Generalized fractional translation symmetry in the $\hat{M}_y$-even subspace}
    For the $\hat{M}_y$-even subspace, it is convenient to deal with the sub-Hamiltonian of Eq.~\eqref{hamiltonian} acting on $\psi^\mathrm{even}=(E_x,E_z,H_y)^\intercal$:
    \begin{equation}
        \hat{H}_\mathrm{even}=
        \begin{pmatrix}
        0 & 0 & \frac{\varepsilon_{xz}}{\varepsilon_1\varepsilon_3}(-i\partial_x)+\frac{\varepsilon_{zz}}{\varepsilon_1\varepsilon_3}k_z\\
        0 & 0 & \frac{\varepsilon_{xx}}{\varepsilon_1\varepsilon_3}(i\partial_x)-\frac{\varepsilon_{xz}}{\varepsilon_1\varepsilon_3}k_z \\
        k_z & i\partial_x & 0
        \end{pmatrix}.
    \end{equation}
    Though the period of the sub-Hamiltonian is the same as the full system, we will show that it can be converted into a new form with fractional period via a similarity transformation.  First, we change the eigen basis from $\psi^\mathrm{even}=(E_x,E_z,H_y)^\intercal$ to $\psi'^{\mathrm{even}}=\hat{G}(x)\psi^\mathrm{even}=(D_x,E_z,H_y)^\intercal$ with the transformation matrix:
    \begin{equation}
        \hat{G}(x)=\begin{pmatrix}
            \varepsilon_{xx}(x) & \varepsilon_{xz}(x) & 0\\
            0 & 1 & 0\\
            0 & 0 & 1
        \end{pmatrix},
    \end{equation}
    and the sub-Hamiltonian is transformed accordingly:
    \begin{equation}
        \hat{H}'_{\mathrm{even}}=\hat{G}\hat{H}_\mathrm{even}\hat{G}^{-1}=
        \begin{pmatrix}
        0 & 0 & k_z \\
        0 & 0 & -\frac{\varepsilon_{xx}}{\varepsilon_1\varepsilon_3}\left((-i\partial_x)+\frac{\varepsilon_{xz}}{\varepsilon_{xx}}k_z\right)\\
        \frac{k_z}{\varepsilon_{xx}} & -\left((-i\partial_x)+\frac{\varepsilon_{xz}}{\varepsilon_{xx}}k_z\right) & 0
        \end{pmatrix}.
    \end{equation}
    It shows that $-i\partial_x$ always appears together with an effective gauge potential $\mathcal{A}_x=k_z\frac{\varepsilon_{xz}}{\varepsilon_{xx}}$ in $\hat{H}'_\mathrm{even}$. Therefore, we can use a $U(1)$ gauge transformation $\hat{U}(x,k_z)=\exp\left[i\int_0^x\mathcal{A}_x(\xi)d\xi\right]=\exp\left[ik_z\int_0^x\frac{\varepsilon_{xz}(\xi)}{\varepsilon_{xx}(\xi)}d\xi\right]$ to remove the effective gauge potential in the Hamiltonian:
    \begin{equation}\label{hamiltonian even}
        \hat{H}''_\mathrm{even}=\hat{U}\hat{H}'_\mathrm{even}\hat{U}^\dagger=\hat{U}\hat{G}\hat{H}_\mathrm{even}\hat{G}^{-1}\hat{U}^\dagger
        =\begin{pmatrix}
        0 & 0 & k_z \\
        0 & 0 & \frac{\varepsilon_{xx}}{\varepsilon_1\varepsilon_3}(i\partial_x)\\
        \frac{k_z}{\varepsilon_{xx}} & i\partial_x & 0
        \end{pmatrix}.
    \end{equation}
    We note that the gauge transformation is $k_z$ dependent. Under the coordinate $z$ representation, it appears as a translation operator with $x$-dependent translation along the $z$ axis:
    \begin{equation}
        \hat{U}(x)=\int dk_z |k_z\rangle\langle k_z|\exp\left[ik_z\int_0^x\frac{\varepsilon_{xz}(\xi)}{\varepsilon_{xx}(\xi)}d\xi\right]=\exp\left[\left(\int_0^x\frac{\varepsilon_{xz}(\xi)}{\varepsilon_{xx}(\xi)}d\xi\right)\partial_z\right]=\hat{T}\left(-\left(\int_0^x\frac{\varepsilon_{xz}(\xi)}{\varepsilon_{xx}(\xi)}d\xi\right)\mathbf{e}_z\right).
    \end{equation}

    After the combined similarity transformation $(\hat{U}\hat{G})$, $\hat{H}''_\mathrm{even}$ is only explicitly dependent on $\varepsilon_{xx}$ and $\varepsilon_1\varepsilon_3=\varepsilon_{xx}\varepsilon_{zz}-(\varepsilon_{xz})^2$ as shown in Eq.~\eqref{hamiltonian even}. For the AB-layer-stacked PhC shown in Fig.~1 of the main text, both $\varepsilon_{xx}$ and $\varepsilon_1\varepsilon_3$ are constant. Therefore, the $\hat{M}_y$-even band structure are also obtained by folding the light cone of a homogeneous system, and all bands are twofold generate along $\Gamma-Z$  except for the lowest one attached at $\omega=|\mathbf{k}|=0$.
    
    Similar to the $\hat{M}_y$-odd case, if we relax the condition to that the common period of $\varepsilon_{xx}(x)$ and $\varepsilon_1(x)\varepsilon_3(x)$ is a fraction $1/N$ of the full period $L$ of the PhC with $N\geq3$, twofold degeneracies of even bands can still appear along the $\Gamma-Z$ line. 
    And without changing the space group of the system, the minimal value of $N$ is 4 and the corresponding components of the permittivity should satisfy 
    \begin{equation}\label{constraint}
        \varepsilon_{ii}(x+L/4)=\varepsilon_{ii}(x),\ (i=x,y,z),\quad \text{and}\quad \varepsilon_{xz}(x+L/4)^2=\varepsilon_{xz}(x)^2,\ \varepsilon_{xz}(x+L/2)=-\varepsilon_{xz}(x). 
    \end{equation}
    In the following, we will focus on this special case with minimized constraints on the PhC that are compatible with the space group $\mathbb{R}^2\rtimes\mathrm{Rod}(22)$ and support the Kramers-like NLs along $\Gamma-Z$.
    
    The $L/4$ periodicity of $\hat{H}''_\mathrm{even}$ can be regarded as a generalized fractional translation symmetry operating on the $\hat{M}_y$-even subspace for the original Hamiltonian~\eqref{hamiltonian}, $\widetilde{T}^\mathrm{even}_x(L/4)\hat{H}(k_y=0)\widetilde{T}^\mathrm{even}_x(L/4)^{-1}=\hat{H}(k_y=0)$, with
    \begin{equation}\label{translationeven}
        \widetilde{T}_x^\mathrm{even}(L/4)=\left(\hat{G}^{-1}\hat{U}^\dagger\hat{T}_x(L/4)\hat{U}\hat{G}\right)\hat{P}_++\hat{P}_-,
    \end{equation}
    where $\hat{G}$ is reformulated as $\hat{G}=\hat{I}_{6\times6}+(\varepsilon_{xx}(x)-1)\hat{\mathbf{e}}_1\hat{\mathbf{e}}_1+\varepsilon_{xz}(x)\hat{\mathbf{e}}_1\hat{\mathbf{e}}_3$ for the 6-dimensional eigenvector $\Psi=(\mathbf{E},\mathbf{H})^\intercal$. Moreover, since $\hat{H}''_{\mathrm{even}}$ also respects the twofold rotation symmetry $\hat{C}_{2x}$ about the $x$ axis, we can further introduce the generalized 1/4-period twofold screw operator:
    \begin{equation}\label{screweven}
        \widetilde{S}_{L/4}^\mathrm{even}=\left(\hat{G}^{-1}\hat{U}^\dagger\left(\hat{C}_{2x}\hat{T}_x(L/4)\right)\hat{U}\hat{G}\right)\hat{P}_++\hat{P}_-,
    \end{equation}
    and the Hamiltonian~\eqref{hamiltonian} is invariant under $\hat{S}_{L/4}^\mathrm{even}$.

    \subsection{Generalized 1/4-period translation and 1/4-period twofold screw symmetries in the $k_y=0$ plane}
    Combining Eqs.~\eqref{translatonodd},\eqref{translationeven} and Eqs.~\eqref{screwodd},\eqref{screweven} respectively, we obtain the generalized 1/4-period translation operator in the whole $k_y=0$ plane:
    \begin{equation}
        \begin{split}
          \widetilde{T}_x(L/4)&=\widetilde{T}^\mathrm{even}_x(L/4)\widetilde{T}^\mathrm{odd}_x(L/4)
        =\hat{T}_x(L/4)\hat{P_-}+\left(\hat{G}^{-1}\hat{U}^\dagger\hat{T}_x(L/N)\hat{U}\hat{G}\right)\hat{P}_+\\
        &=\widetilde{U}^{-1}\hat{T}_x(L/4)\widetilde{U}
        =\left(\hat{P}_-+\hat{G}^{-1}\hat{U}^\dagger\hat{P}_+\right)\hat{T}_x(L/4)\left(\hat{P}_-+\hat{U}\hat{G}\hat{P}_+\right),
        \end{split}
    \end{equation}
    and the generalized 1/4-period twofold screw operator in the $k_y=0$ plane:
    \begin{equation}
        \widetilde{S}_{L/4}=\widetilde{S}_{L/4}^\mathrm{even}\widetilde{S}_{L/4}^\mathrm{odd}
        =\left(\hat{C}_{2x}\hat{T}_x(L/4)\right)\hat{P}_-+\left(\hat{G}^{-1}\hat{U}^\dagger\left(\hat{C}_{2x}\hat{T}_x(L/4)\right)\hat{U}\hat{G}\right)\hat{P}_+
        =\widetilde{U}^{-1}\left(\hat{C}_{2x}\hat{T}_x(L/4)\right)\widetilde{U},
    \end{equation}
    where 
    \begin{equation}
        \widetilde{U}(x,k_z)=\hat{P}_-+\hat{U}\hat{G}\hat{P}_+=
        \left(\begin{array}{c@{}|c@{}}
        \begin{smallmatrix}
        e^{i\varphi(x,k_z)}\varepsilon_{xx}(x) & 0 & e^{i\varphi(x,k_z)}\varepsilon_{xz}(x)\ \\[1ex]
        0 & 1 & 0 \\[1ex]
        0 & 0 & e^{i\varphi(x,k_z)}\rule[-1ex]{0pt}{2ex}
        \end{smallmatrix}& \mbox{\large0}\\\hline
        \mbox{\large0} &
        \begin{smallmatrix}
        \rule[1ex]{0pt}{2ex}1 & 0 & 0\\[1ex]
        0 & e^{i\varphi(x,k_z)} & 0\\[1ex]
        0 & 0 & 1
        \end{smallmatrix}
        \end{array}\right)
    \end{equation}
    with $\varphi(x,k_z)=k_z\int_0^x\frac{\varepsilon_{xz}(\xi)}{\varepsilon_{xx}(\xi)}d\xi$,
    and $\widetilde{U}^{-1}=\hat{P}_-+\hat{G}^{-1}\hat{U}^\dagger\hat{P}_+$. It can be directly checked that the effective Hamiltonian given in Eq.~\eqref{hamiltonian} is invariant under $\widetilde{T}_x(L/4)$ and $\widetilde{S}_{L/4}$, providing that the permittivity of the PhC satisfies Eq.~\eqref{constraint},
    \begin{gather}
        \widetilde{T}_x(L/4)\hat{H}(k_y=0)\widetilde{T}_x(L/4)^{-1}=\hat{H}(k_y=0),\\
        \widetilde{S}_{L/4}\hat{H}(k_y=0)\widetilde{S}_{L/4}^{-1}=\hat{H}(k_y=0).
    \end{gather}
    Consequently, we have demonstrated that the fractional periodicity of the elements of permittivity tensor engenders the hidden symmetries of the Maxwell's equations.
    And we have the relation between the generalized 1/4-period screw rotation and the generalized fractional translation: $\widetilde{S}_{L/4}^2=\widetilde{T}_x(L/4)^2=\widetilde{T}_x(L/2)$.
    The generalized 1/4-period translation operator satisfies
    \begin{equation}
        \widetilde{T}_x(L/4)^\dagger=\widetilde{U}^\dagger\hat{T}_x(-L/4)(\widetilde{U}^{-1})^\dagger
        =\widetilde{U}^\dagger\widetilde{U}\left(\widetilde{U}^{-1}\hat{T}_x(-L/4)\widetilde{U}\right)\widetilde{U}^{-1}(\widetilde{U}^{-1})^\dagger
        =\left(\widetilde{U}^\dagger\widetilde{U}\right)\widetilde{T}_x(L/4)^{-1}\left(\widetilde{U}^\dagger\widetilde{U}\right)^{-1},
    \end{equation}
    and the similar result is established for the generalized screw rotation:
    \begin{equation}
        \widetilde{S}_{L/4}^\dagger=\left(\widetilde{U}^\dagger\widetilde{U}\right)\widetilde{S}_{L/4}^{-1}\left(\widetilde{U}^\dagger\widetilde{U}\right)^{-1}.
    \end{equation}
    Since $\widetilde{U}^\dagger\widetilde{U}$ is hermitian and positive definite, both the two hidden symmetry operators are  \textbf{$(\widetilde{U}^\dagger\widetilde{U})$-pseudo-unitary} and have unimodular eigenvalues~\cite{mostafazadeh2004Pseudounitary}. In particular, since $\left(\widetilde{S}_{L/4}\right)^4=\hat{T}_x(L)$, the eigenvalues of $\widetilde{S}_{L/4}$ for a Bloch state $\Psi(k_x,0,0)$ on the $k_x$-axis must be a forth root of $e^{ik_xL}$, \textit{i.e.} 
    \begin{equation}\label{branch index}
        \widetilde{S}_{L/4}\Psi^{(s)}(k_x,0,0)=s\,e^{ik_xL/4}\,\Psi^{(s)}(k_x,0,0),    
    \end{equation}
    with $s=\pm1,\pm i$ denoting the \textbf{$\widetilde{S}_{L/4}$ branch index} of the Bloch state. We note that the $\widetilde{S}_{L/4}$ branch index is only well defined for the states on the $k_x$-axis. Nevertheless, since $\widetilde{S}_{L/4}^2=\widetilde{T}_x(L/2)$, we have
    \begin{gather}
        \widetilde{S}_{L/4}^2\Psi(k_x,0,k_z)=\widetilde{T}_x(L/2)\Psi(k_x,0,k_y)=\pm e^{ik_x L/2}\Psi(k_x,0,k_z),\label{S2parity}\\
        \widetilde{S}_{L/4}^2\Psi^{(s)}(k_x,0,0)=s^2e^{ik_x L/2}\Psi^{(s)}(k_x,0,0).
    \end{gather}
    Therefore, the $\widetilde{S}_{L/4}^2$-parity (equal to the square of the branch index $s^2=\pm1$) is well defined for the whole band on the $k_y=0$ plane, and is determined by the branch index of the states $\Psi^{(s)}(k_x,0,0)$ on that band. For convenience, we will assign a ``\textbf{pseudo branch index}'' for every Bloch state on the $k_y=0$ plane as $\Psi^{(s)}(k_x,0,k_z)$, while we remind that only its square $s^2$ denoting the $\widetilde{S}_{L/4}^2$-parity of the state is  meaningful in general (the sign of the pseudo branch index for a state with $k_z\neq0$ is indeterminate, as $s^2=(-s)^2$), but the sign of $s$ makes practical sense for the states on the $k_x$-axis.
    
    

    \section{Kramers degeneracies induced by the generalized 1/4-period twofold screw symmetry}
    Here, we express the time reversal operator in the coordinate-momentum mixed representation $(x,k_y,k_z)$: $\mathcal{T}=\hat{\tau}\mathcal{K}(k_y\to-k_y,k_z\to-k_z)$, where $\hat{\tau}=\mathrm{diag}(\hat{I}_{3\times3},-\hat{I}_{3\times3})$ and $\mathcal{K}$ denotes complex conjugate. It is easy to check that $\mathcal{T}$ commutes with $\widetilde{U}$ and $\widetilde{S}_{L/4}$: $[\mathcal{T},\widetilde{U}]=0$, $[\mathcal{T},\widetilde{S}_{L/4}]=0$. As introduced in the main text, the combination of the generalized 1/4-period screw rotation operator $\widetilde{S}_{L/4}$ and the time reversal  gives a  \textbf{$(\widetilde{U}^\dagger\widetilde{U})$-pseudo-antiunitary} symmetry operation for the layer-stacked PhC:
    \begin{equation}
         \hat{\Theta}_{L/4}=\mathcal{T}\,\widetilde{S}_{L/4},
    \end{equation}
    Here, $(\widetilde{U}^\dagger\widetilde{U})$-pseudo-antiunitarity means that
    \begin{equation}
        \langle\hat{\Theta}_{L/4}\psi|\widetilde{U}^\dagger\widetilde{U}|\hat{\Theta}_{L/4}\phi\rangle=\langle\psi|\widetilde{U}^\dagger\widetilde{U}|\phi\rangle^*,
    \end{equation}
    where $\psi$, $\phi$ represent two arbitrary states.
    
    $\hat{\Theta}_{L/4}$ operating on a Bloch state $\Psi^{(s)}(k_x,0,k_z)$ on $k_y=0$ plane yields a new Bloch state $\widetilde{\Psi}(-k_x,0,k_z)=\hat{\Theta}_{L/4}\Psi^{(s)}(k_x,0,k_z)$ of the same frequency at $(-k_x,0,k_z)$. 
    In addition, since
    \begin{equation}
    \begin{split}
            \widetilde{S}_{L/4}\widetilde{\Psi}(-k_x,0,0)&=\widetilde{S}_{L/4}\hat{\Theta}_{L/4}\Psi^{(s)}(k_x,0,0)
        =\hat{\Theta}_{L/4}\widetilde{S}_{L/4}\Psi^{(s)}(k_x,0,0)\\
        &=\widetilde{S}_{L/4}\left(s e^{ik_xL/4}\Psi^{(s)}(k_x,0,0)\right)
        =s^*e^{-ik_xL/4} \widetilde{\Psi}(-k_x,0,0),
    \end{split}
    \end{equation}
    the new Bloch state $\widetilde{\Psi}(-k_x,0,k_z)=\widetilde{\Psi}^{(s^*)}(-k_x,0,k_z)$ has pseudo branch index $s^*$ and has the same $\widetilde{S}^2_{L/4}$-parity $(s^*)^2=s^2$ as $\Psi^{(s)}(k_x,0,k_z)$.
    
    In the mean time, 
    \begin{gather}
         \hat{\Theta}_{L/4}^2\Psi^{(s)}(k_x,0,k_z)=\widetilde{S}_{L/4}^2\Psi^{(s)}(k_x,0,k_z)=s^2e^{ik_xL/2}\Psi^{(s)}(k_x,0,k_z).
    \end{gather}
    In particular, on the $\Gamma-Z$ line ($k_x=k_y=0$), 
    $
        \hat{\Theta}_{L/4}^2\Psi^{(s)}(0,0,k_z)=s^2\Psi^{(s)}(0,0,k_z).
    $
    Consequently, if the $\widetilde{S}_{L/4}^2$-parity of the state equals $s^2=-1$ (\textit{i.e.} pseudo branch index of $\Psi^{(s)}(0,k_z)$ is $s=\pm i$), we also have
    \begin{equation}
        \hat{\Theta}_{L/4}^2\Psi^{(\pm i)}(0,0,k_z)=-\Psi^{(\pm i)}(0,0,k_z),
    \end{equation} 
    so the pseudo-antiunitary operator $\hat{\Theta}_{L/4}$ serves as a pseudo-Fermionic time reversal symmetry for the Bloch states with $\widetilde{S}_{L/4}^2$-parity $s^2=-1$ on the $\Gamma-Z$ line. And according to the similar derivation of Kramers theorem, we have
    \begin{equation}
    \begin{split}
          \langle\widetilde{\Psi}^{(\mp i)}(0,0,k_z)|\widetilde{U}^\dagger\widetilde{U}|\Psi^{(\pm i)}(0,0,k_z)\rangle
        &=\langle\hat{\Theta}_{L/4}\Psi^{(\pm i)}|\widetilde{U}^\dagger\widetilde{U}|\Psi^{(\pm i)}\rangle
        =\langle\hat{\Theta}_{L/4}^2\Psi^{(\pm i)}|\widetilde{U}^\dagger\widetilde{U}|\hat{\Theta}_{L/4}\Psi^{(\pm i)}\rangle^*\\
        &=-\langle\Psi^{(\pm i)}|\widetilde{U}^\dagger\widetilde{U}|\hat{\Theta}_{L/4}\Psi^{(\pm i)}\rangle^*
        =-\langle\Psi^{(\pm i)}|\widetilde{U}^\dagger\widetilde{U}|\hat{\Theta}_{L/4}\Psi^{(\pm i)}\rangle^\dagger\\
        &=-\langle\widetilde{\Psi}^{(\mp i)}(0,0,k_z)|\widetilde{U}^\dagger\widetilde{U}|{\Psi}^{(\pm i)}(0,0,k_z)\rangle\\[5pt]
        \Rightarrow\quad & \quad \langle\widetilde{\Psi}^{(\mp i)}(0,0,k_z)|\widetilde{U}^\dagger\widetilde{U}|{\Psi}^{(\pm i)}(0,0,k_z)\rangle=0,
    \end{split}
    \end{equation}
    where the second equality of the first line is due to the pseudo-antiunitarity of $\hat{\Theta}_{L/4}$. Since $\widetilde{U}^\dagger\widetilde{U}$ is positive definite, $\langle\widetilde{\Psi}(0,k_z)^{(\mp i)}|\widetilde{U}^\dagger\widetilde{U}|{\Psi}^{(\pm i)}(0,0,k_z)\rangle=0$ indicates that \textbf{$\widetilde{\Psi}^{(\mp i)}(0,0,k_z)=\hat{\Theta}_{L/4}\Psi^{(\pm i)}(0,0,k_z)$ and $\Psi^{(\pm i)}(0,0,k_z)$ must be two distinct and degenerate Bloch states}.  As a result, we have proved that \textbf{a Bloch band with negative $\widetilde{S}_{L/4}^2$ parity must intersect with another band with negative $\widetilde{S}_{L/4}^2$ parity along $\Gamma-Z$, forming Kramers-like NLs. Moreover, the pair of intersecting bands have certain branch indices $s=i$ and $s=-i$ along $\Gamma-X$}.

    On the other hand, it is well known that the ordinary twofold screw symmetry $\hat{S}_{2x}$ together with the time reversal symmetry $\mathcal{T}$ protects all Bloch states on the $k_x=\pi/L$ plane are doubly degenerate. And the two degenerate states $\Psi(\pi/L,k_y,k_z)$ and $\widetilde{\Psi}(\pi/L,k_y,k_z)$ are correlated by the combined antiunitary operator $\hat{\Theta}_{L/2}=\mathcal{T}\hat{S}_{2x}$:
    \begin{equation}
        \widetilde{\Psi}(\pi/L,k_y,k_z)= \hat{\Theta}_{L/2}{\Psi}(\pi/L,k_y,k_z).
    \end{equation}
    In particular, at X point ($\mathbf{k}=(\pi/L,0,0)$), if ${\Psi}^{(s)}(\pi/L,0,0)$ has branch index $s$, we have
    \begin{equation}
    \begin{split}
        \widetilde{S}_{L/4}\widetilde{\Psi}(\pi/L,0,0)&=\widetilde{S}_{L/4}\hat{\Theta}_{L/2}{\Psi}^{(s)}(\pi/L,0,0)
        =\hat{\Theta}_{L/2}\widetilde{S}_{L/4}{\Psi}^{(s)}(\pi/L,0,0)\\
        &=\hat{\Theta}_{L/2}\left(se^{i\pi/4}\Psi^{(s)}(\pi/L,0,0)\right)
        =s^*e^{-i\pi/4}\hat{\Theta}_{L/2}\Psi^{(s)}(\pi/L,0,0)\\
        &=(-is^*)e^{i\pi/4}\widetilde{\Psi}(\pi/L,0,0),
    \end{split}
    \end{equation}
    where the second equality of the first line is due to the fact that $\widetilde{S}_{L/4}$ and $\hat{\Theta}_{L/2}$ are commutable. This result shows that $\widetilde{\Psi}(\pi/L,0,0)=\widetilde{\Psi}^{(-is^*)}(\pi/L,0,0)$ has branch index $-is^*$. In other words, \textbf{the pair of Bloch bands (along the $k_x$-axis) degenerate at $X$ must have branch indices of either $s=1,s=-i$ or $s=-1,s=i$}.

    \section{Determining the branch indices of the two bands connected to $|\mathbf{k}|=\omega=0$}
    In this section, we figure out the branch indices of the two bands attached at $|\mathbf{k}|=\omega=0$. The expansion of the Bloch states $\Psi(k_x,0,0)$ at $k_x=\omega=0$ reads
    \begin{equation}
    \Psi(k_x,0,0)=e^{i k_x x}u_{k_x}\left(x\right)=e^{i k_x x}\left[u_0\left(x\right)+\left.\frac{\partial u_{k_x}}{\partial k_x}\right|_{k_x=0}k_x+\mathcal{O}\left(k_x^2\right)\right],
    \end{equation}
    with $u_0=(\mathbf{e}^0,\mathbf{h}^0)^\intercal$.
    Substitution of the expansion into the Maxwell's equation~\eqref{maxwell} yields 
    \begin{equation}
    \underbrace{\hat{\mathcal{N}}\left(-i\partial_x\right)u_0}_{\displaystyle \text{0-order:}\ \mathcal{O}(1)}+\underbrace{\left[\hat{\mathcal{N}}\left(k_x\right)u_0-\omega \hat{\mathcal{M}}u_0+\hat{\mathcal{N}}\left(-i\partial_x\right)\left.\frac{\partial u_{k_x}}{\partial k_x}\right|_{k_x=0}k_x\right]}_{\displaystyle \text{1-order:}\ \mathcal{O}(k_x)}+\mathcal{O}\left(k_x^2\right)=0.
    \end{equation}
    \begin{align}
        \text{0-order:}&\quad\hat{\mathcal{N}}\left(-i\partial_x\right)u_0=0\quad\Rightarrow\quad (-i\partial_x\hat{\mathbf{x}})\times u_0=0\quad\Rightarrow\quad \partial_x(e_y^0,e_z^0,h_y^0,h_z^0)=0,\\
        \text{1-order:}&\quad  \underbrace{\hat{\mathcal{N}}\left(k_x\right)u_0}_{\displaystyle\perp\hat{\mathbf{x}}}-\omega \hat{\mathcal{M}}u_0+\underbrace{\hat{\mathcal{N}}\left(-i\partial_x\right)\left.\frac{\partial u_{k_x}}{\partial k_x}\right|_{k_x=0}k_x}_{\displaystyle\perp\hat{\mathbf{x}}}=0\quad\Rightarrow\quad \left(\hat{\mathcal{M}}u_0\right)\cdot\hat{\mathbf{x}}=(d_x^0,b_x^0)=0,
    \end{align}
    where $d_x^0,b_x^0$ denote the $x$ components of the 0-order $\mathbf{D}$ and $\mathbf{B}$ fields respectively.
    When $k_z=0$, $\hat{U}$ is reduced to the identity matrix, thus $\widetilde{U}\left(k_z=0\right)=\left(\hat{P}_{-}+\hat{G}\hat{P}_{+}\right)$ and $\widetilde{u}_0=\widetilde{U}\left(k_z=0\right)u_0=\left(0,e_y^0,e_z^0,0,h_y^0,h_z^0\right)$ is a transverse constant vector. So we obtian
    \begin{equation}
        \lim_{k_x\to0}\widetilde{S}_{L/4}\Psi(k_x,0,0)=\widetilde{S}_{L/4}u_0=\widetilde{U}^{-1}\hat{C}_{2x}\hat{T}_x(L/4)\widetilde{u}_0=\widetilde{U}^{-1}(-\widetilde{u}_0)=-u_0.
    \end{equation}
    Therefore, \textbf{both the two bands stemming from $|\mathbf{k}|=\omega=0$ have the same branch index $s=-1$}.

    \section{Asymptotic dispersion of bands at infinity}
    For a general permitivitiy given by Eq.~(1) of the main text,
    the wave equations of $\hat{M}_y$-odd and even modes on the $k_y=0$ plane are, respectively,
    \begin{align}
    \hat{M}_y\text{-odd:}&\qquad\left[-\varepsilon_{yy}^{-1/2}\frac{d^2}{dx^2}\varepsilon_{yy}^{-1/2}+k_z^2\varepsilon_{yy}^{-1}\right]\left(\sqrt{\varepsilon_{yy}}E_y\right)=\frac{\omega^2}{c^2}\left(\sqrt{\varepsilon_{yy}}E_y\right)\\    
    \label{generaleven}
    \hat{M}_y\text{-even:}&\qquad\left[-\frac{d}{dx}\frac{\varepsilon_{xx}}{\varepsilon_1\varepsilon_3}\frac{d}{dx}+k_z\left\{-i\frac{d}{dx},\frac{\varepsilon_{xz}}{\varepsilon_1\varepsilon_3}\right\}+k_z^2 \frac{\varepsilon_{zz}}{\varepsilon_1\varepsilon_3}\right]H_y=\frac{\omega^2}{c^2} H_y
    \end{align}
    where $\{\cdot,\cdot\}$ denotes the anticommutator. By introducing a new function $\widetilde{H}_y\left(x\right)=\hat{U}H_y=\exp{\left(ik_z\int_0^x\frac{\varepsilon_{xz}(\xi)}{\varepsilon_{xx}(\xi)}d\xi\right)}H_y$, Eq.\eqref{generaleven} can be transformed into a standard Sturm-Liouville equation:
    \begin{equation}\label{sturm}
    \left[-\frac{d}{dx}\frac{\varepsilon_{xx}}{ \varepsilon_1\varepsilon_3}\frac{d}{dx}+k_z^2\varepsilon_{xx}^{-1}\right]\widetilde{H}_y=\frac{\omega^2}{c^2} \widetilde{H}_y.
    \end{equation}
    Therefore, the eigen-equations for $\hat{M}_y$ even and odd bands can be uniformly expressed as follows:
    \begin{equation}\label{sturm2}
    \left[\hat{K}+k_z^2V\left(x\right)\right]\psi=\frac{\omega^2}{c^2}\psi,
    \end{equation}
    with Bloch boundary condition $\psi\left(0\right)=e^{i k_x L}\psi\left(L\right)$, $\frac{d}{dx}\psi\left(0\right)=e^{i k_x L}\frac{d}{dx}\psi\left(L\right)$,
    where $\hat{K}$ is a positive semidefinite operator arising from the fact that the permittivity tensor of dielectrics is positive definite, and $V\left(x\right)=V\left(x+L\right)$ is a positive piecewise smooth function (so it has infimum $\inf\left(V\right)=V_\mathrm{min}$).
    
    \begin{theorem}
     All bands $\omega_n\left(k_z\right)$ of Eq.\eqref{sturm2} tend to a unique asymptotic linear dispersion, as $k_z \to \infty$, and the asymptotic slope is determined by the infimum of $V\left(x\right)$: 
    \begin{equation}
    \lim_{k_z \to \infty}\frac{\omega_n}{k_z}=\lim_{k_z \to \infty}\frac{d \omega_n}{d k_z}=c\sqrt{V_\mathrm{min}}.    
    \end{equation}
    \end{theorem}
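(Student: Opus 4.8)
The plan is to combine the operator lower bound $\hat K\ge 0$ with a Rayleigh--Ritz upper bound built from bump functions localized where $V$ is nearly minimal, and then to convert the resulting squeeze on $\omega_n^2$ into a squeeze on the group velocity $d\omega_n/dk_z$ via the Feynman--Hellmann formula. Write $E_n(k_x,k_z)=\omega_n(k_x,k_z)^2/c^2$ for the $n$-th Bloch eigenvalue (at fixed $k_x$) of the self-adjoint operator $A(k_z)=\hat K+k_z^2V(x)$ subject to the Bloch boundary conditions of Eq.~\eqref{sturm2}, realized through its quadratic form $Q_{k_z}[\psi]=\langle\psi,\hat K\psi\rangle+k_z^2\int_0^L V|\psi|^2\,dx$ on the corresponding $H^1$ Bloch form domain; our standing hypotheses are $\langle\psi,\hat K\psi\rangle\ge0$ and $V(x)\ge V_\mathrm{min}>0$. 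The goal is to prove $E_n/k_z^2\to V_\mathrm{min}$ and $d\omega_n/dk_z\to c\sqrt{V_\mathrm{min}}$ as $k_z\to\infty$ for every fixed band index $n$, uniformly in $k_x$.

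For the two-sided estimate on $E_n$: the lower bound is immediate, since $Q_{k_z}[\psi]\ge k_z^2\int_0^L V|\psi|^2\,dx\ge k_z^2V_\mathrm{min}\|\psi\|^2$ forces $E_n(k_x,k_z)\ge k_z^2V_\mathrm{min}$ for all $n,k_x$. For the upper bound, fix $\epsilon>0$ and $n$. Because $V$ is piecewise smooth with $\inf V=V_\mathrm{min}$, on at least one of its finitely many smooth pieces $V$ comes within $\epsilon$ of $V_\mathrm{min}$, and then by continuity on that piece there is an open subinterval $J_\epsilon\subset(0,L)$ on which $V<V_\mathrm{min}+\epsilon$. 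Choose $n+1$ smooth bumps $\phi_0,\dots,\phi_n\in C_c^\infty(J_\epsilon)$ with pairwise disjoint supports; each $\phi_j$ vanishes together with all its derivatives at $x=0,L$, hence lies in the Bloch form domain for \emph{every} $k_x$. On $W=\mathrm{span}\{\phi_j\}$ the disjoint supports diagonalize both $\langle\cdot,\hat K\cdot\rangle$ and $\int V|\cdot|^2$, so every $\psi\in W$ satisfies $Q_{k_z}[\psi]/\|\psi\|^2\le\max_j\langle\phi_j,\hat K\phi_j\rangle/\|\phi_j\|^2+k_z^2(V_\mathrm{min}+\epsilon)=:C_\epsilon+k_z^2(V_\mathrm{min}+\epsilon)$, with $C_\epsilon$ independent of $k_z$ and $k_x$. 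The Courant--Fischer min--max principle applied to this $(n+1)$-dimensional trial space gives $E_n(k_x,k_z)\le C_\epsilon+k_z^2(V_\mathrm{min}+\epsilon)$. Hence $V_\mathrm{min}\le E_n/k_z^2\le V_\mathrm{min}+\epsilon+C_\epsilon/k_z^2$; sending $k_z\to\infty$ and then $\epsilon\to0$ yields $\lim_{k_z\to\infty}E_n(k_x,k_z)/k_z^2=V_\mathrm{min}$, i.e. $\lim\omega_n/k_z=c\sqrt{V_\mathrm{min}}$, uniformly in $k_x$ and with the same limit for all $n$ (the ``unique asymptotic linear dispersion'').

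For the group velocity, note that $A(k_z)$ is an entire self-adjoint family in $k_z$ (an analytic family of type (A), since $V$ is bounded), so by standard analytic perturbation theory each band $E_n(k_x,\cdot)$ is real-analytic off a discrete set of level crossings, with normalized eigenfunction $\psi_n(k_z)$, and Feynman--Hellmann gives $E_n'(k_z)=\langle\psi_n,\partial_{k_z}A(k_z)\psi_n\rangle=2k_z\langle\psi_n,V\psi_n\rangle$. Decomposing $E_n=\langle\psi_n,\hat K\psi_n\rangle+k_z^2\langle\psi_n,V\psi_n\rangle$ with $\langle\psi_n,\hat K\psi_n\rangle\ge0$ gives the elementary two-sided bound $V_\mathrm{min}\le\langle\psi_n,V\psi_n\rangle\le E_n/k_z^2$, so the squeeze just established forces $\langle\psi_n,V\psi_n\rangle\to V_\mathrm{min}$. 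Therefore
\[
\frac{d\omega_n}{dk_z}=\frac{c^2E_n'(k_z)}{2\omega_n}=\frac{c^2\,\langle\psi_n,V\psi_n\rangle}{\omega_n/k_z}\;\longrightarrow\;\frac{c^2V_\mathrm{min}}{c\sqrt{V_\mathrm{min}}}=c\sqrt{V_\mathrm{min}}\quad\text{as }k_z\to\infty,
\]
and since $\omega_n$ is continuous this limit is unaffected by the crossing set; the same bound $\langle\psi_n,V\psi_n\rangle\in[V_\mathrm{min},E_n/k_z^2]$ holds for any normalized eigenfunction in a degenerate eigenspace, so degeneracies cause no trouble. I expect the only genuinely delicate points to be (i) producing the localizing trial subspace when $V$ is merely piecewise smooth — handled above by working inside a single smooth piece — and (ii) the rigorous passage to the derivative, which is what forces the appeal to analytic perturbation theory (to know the bands are piecewise $C^1$ in $k_z$ and to license Feynman--Hellmann); once those are in place, the rest is a routine application of the variational characterization of Bloch eigenvalues.
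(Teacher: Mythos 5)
Your proof is correct, and for the main limit $\omega_n/k_z\to c\sqrt{V_{\mathrm{min}}}$ it is essentially the paper's route: the same trivial lower bound from positivity of $\hat K$ and $V\ge V_{\mathrm{min}}$, and the same variational upper bound $\omega_n^2/c^2\le K_{\max}+k_z^2(V_{\mathrm{min}}+\epsilon)$ from an $n$-dimensional (in your indexing, $(n+1)$-dimensional) trial space on which the Rayleigh quotient of $V$ is below $V_{\mathrm{min}}+\epsilon$. The paper inlines the dimension-counting step of the min--max theorem (intersecting its space $F_n$ with $\mathrm{span}\{\psi_n,\psi_{n+1},\dots\}$) and merely asserts that such an $F_n$ exists, while you invoke Courant--Fischer directly and actually construct the trial space from disjointly supported bumps inside a subinterval where $V<V_{\mathrm{min}}+\epsilon$; your construction uses the locality of $\hat K$ (true for the second-order operators at hand, though the theorem is stated only with ``$\hat K$ positive semidefinite''), and it makes the $k_x$- and $k_z$-independence of the constant explicit. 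The genuine divergence is the derivative statement: the paper disposes of $\lim d\omega_n/dk_z$ with a one-line appeal to L'H\^opital's rule, which as written runs in the wrong direction (knowing $\lim \omega_n/k_z$ does not by itself give existence or the value of $\lim d\omega_n/dk_z$), whereas you prove it via analytic perturbation theory and Feynman--Hellmann, squeezing $\langle\psi_n|V|\psi_n\rangle$ between $V_{\mathrm{min}}$ and $E_n/k_z^2$ and handling crossings/degeneracies. On this point your argument is strictly more rigorous than the paper's; the trade-off is that it needs the type-(A) analyticity machinery, while the paper's (heuristic) step needs nothing.
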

    
    \begin{proof}
    1) Lower bound: left product of $\psi_n\left(x\right)$ to Eq.\eqref{sturm2} yields $\omega_n^2/c^2=\langle\psi_n|\hat{K}|\psi_n\rangle+k_z^2\langle\psi_n|V|\psi_n\rangle \ge k_z^2 V_{\min}$, since $\hat{K}$ is positive semidefinite. Hence, we have
    $\frac{\omega_n}{k_z} \ge c\sqrt{V_{\min}}$.
    
    2) Upper bound: for a fixed $k_z$, all the eigenstates $\{\psi_i\}$ of Eq.\eqref{sturm2} from a compete basis of the function space $F$ satisfying the boundary conditions. And we consider the subspace $\mathrm{span}\{\psi_1,\cdots, \psi_{n-1} \}$ spanned by the eigensates corresponding to the first $n-1$ eigenvalues $\omega_1^2\le \cdots \le \omega_{n-1}^2 \le \omega_n^2 \le \cdots$, and its complement space is $F\backslash \text{span}\{\psi_1,\cdots \psi_{n-1}\}=\text{span}\{\psi_n, \psi_{n+1},\cdots\}$.
    
    On the other hand, $\forall \epsilon >0$, we can find $n$-D function space $F_n$ satisfying the boundary conditions, such that $\frac{\left<u|V|u\right>}{\left<u|u\right>}\le V_{\min}+\epsilon$, $\forall u\left(x\right)\not\equiv0 \in F_n$. And since $\dim F_n=n>\dim \text{span}\{\psi_1,\cdots,\psi_{n-1}\}$, there exists a state $v\left(x\right)\neq 0\in F_n\cap \mathrm{span}\{\psi_n,\psi_{n+1},\cdots\}$, so we can expand $v\left(x\right)$ using the eigenstates $\psi_i$ ($i \ge n$): $v\left(x\right)=\sum_{i=n}^\infty \alpha_i \psi_i \left(x\right)$, and accordingly
    \begin{displaymath}
    \frac{\omega_n^2}{c^2}\le \frac{\sum_{i=n}^{\infty}\left|\alpha_i\right|^2\omega_i^2/c^2}{\sum_{i=n}^\infty \left|\alpha_i\right|^2}=\frac{\langle v|\hat{K}+k_z^2V|v\rangle}{\langle v|v\rangle}\le K_{\max}+k_z^2\left(V_{\min}+\epsilon\right),
    \end{displaymath}
    where $K_{\max}=\max \left\{\frac{\langle v|\hat{K}|v\rangle}{\langle v|v\rangle}|u\left(x\right)\ne 0 \in F_n \right\}< \infty$ is the upper bound of $\hat{K}$ in the subspace $F_n$. Therefore, we have
    \begin{displaymath}
    \frac{\omega_n}{c\,k_z}\le \inf_{\epsilon>0} \sqrt{\frac{K_{\max}}{k_z^2}+V_{\min}+\epsilon}.
    \end{displaymath}
    As $k_z \to \infty$, both the lower and upper bounds of $\omega_n/k_z$ tend to $\inf_{\epsilon>0}\sqrt{V_{\min}+\epsilon}=\sqrt{V_{\min}}$, so we obtain the limit $\lim_{k_z \to \infty} \omega_n/k_z=c\sqrt{V_{\min}}$. And due to L'H\^{o}pital's rule, we also have $\lim_{k_z \to\infty} d\omega/dk_z=c\sqrt{V_{\min}}$.
    \end{proof}
    
    \textbf{Remark}: The theorem can also be proved using min-max theorem of eigenvalues \cite{binding1996Eigencurves}. Indeed, some
    procedures in the proof of min-max theorem have also been used in the above proof. In addition, Ref.
    \cite{kutsenko2013Spectral} offers a different proof.
    
    As a consequence, the asymptotic group velocities of even bands and order bands are, respectively,
    \begin{equation}
    \begin{split}
    \lim_{k_z \to \infty} \frac{d\omega^{\mathrm{odd}}_n}{d k_z}=\frac{c}{\sqrt{\varepsilon^\mathrm{max}_{yy}}}, \\
    \lim_{k_z \to \infty} \frac{d\omega^{\mathrm{even}}_n}{d k_z}=\frac{c}{\sqrt{\varepsilon^\mathrm{max}_{xx}}}.
    \end{split}
    \end{equation}
    
    This asymptotic group velocities are numerically verified, as shown in Fig.~2d of the main text. Therefore, along $\Gamma-Z$ direction, all bands tend to a linear dispersion as $k_z \to \infty$. All even and all odd bands have identical asymptotic group velocities respectively. For almost any dielectric photonic crystal respecting the symmetries, there exits a pair of triple points as the nexus of the nodal lines (intersection either of 1st even and 2nd,3rd odd bands or of 1st odd and 2nd,3rd even bands). The only exception (zero measure) occurs when the asymptotic group velocities of even and odd bands are identical, namely $\varepsilon^{\max}_{xx}=\varepsilon^{\max}_{yy}$.
    
    \begin{figure}[htb]
        \includegraphics[width=0.62\textwidth]{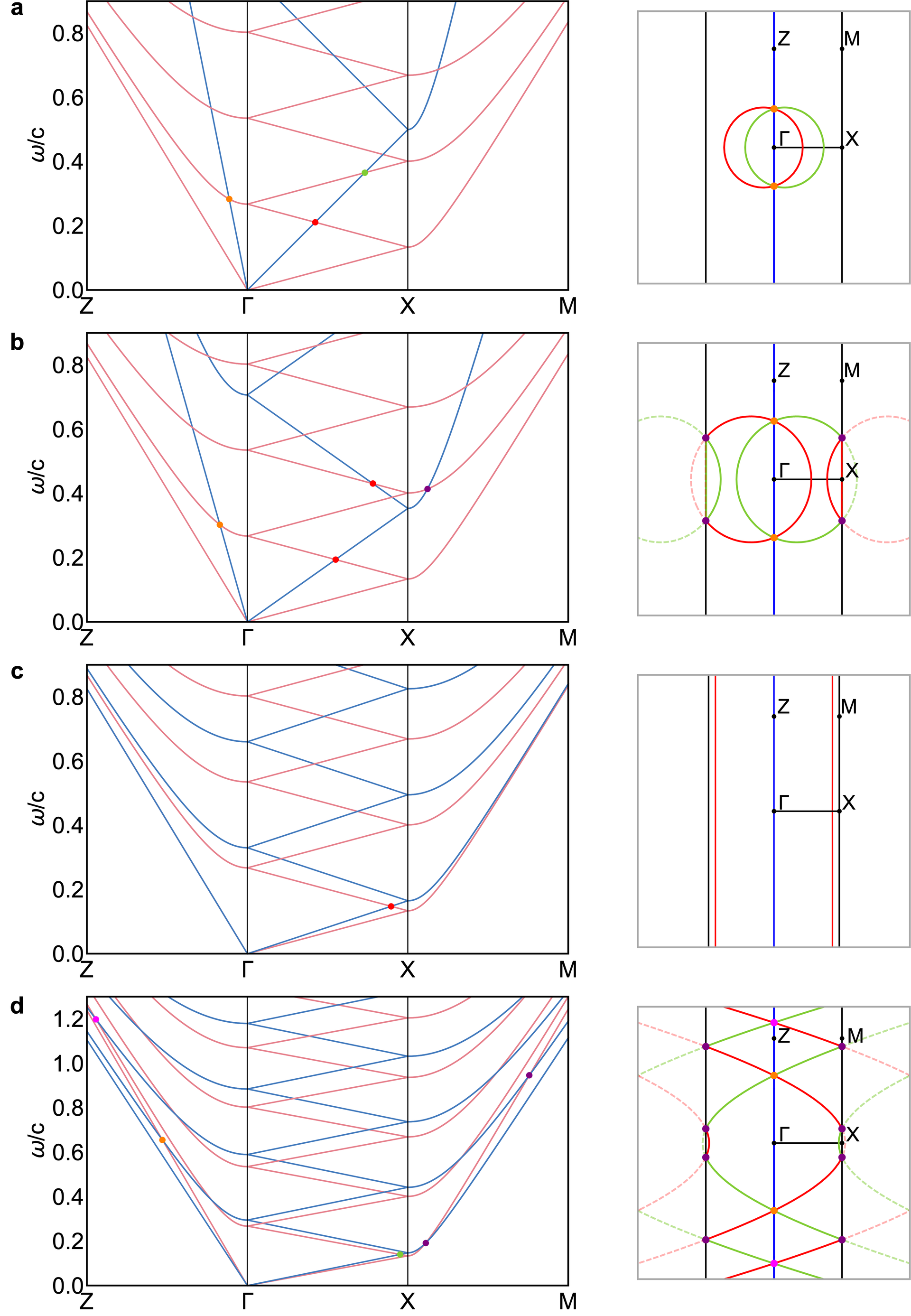}
        \caption{\label{parachange}  The band structures and nodal structures on the $k_y=0$ plane for the AB-layered PhCs with fixed parameters $\varepsilon_{xx}=9.17$, $\varepsilon_{zz}=14.83$, $|\varepsilon_{xz}|=2.83$, and  with different values of $\varepsilon_{yy}$ in each panel: \textbf{a}. $\varepsilon_{yy}=1$; \textbf{b}. $\varepsilon_{yy}=2$; \textbf{c}. $\varepsilon_{yy}=\varepsilon_{xx}=9.17$; \textbf{d}. $\varepsilon_{yy}=11.5$. The light blue and light red lines denote $\hat{M}_y$-odd and $\hat{M}_y$-even bands, respectively. Red and green dots and curves denote nodal lines protected by mirror and $\mathcal{PT}$ symmetries. Blue lines denote Kramers-like nodal lines. Orange dots denote triply degenerate nexus points. Purple and magenta dots denote fourfold degenerate nexus points.}
      \end{figure}

    \section{Robustness of nexus points against the variation  of  materials}
    In the main text, we have expounded that the special photonic band connectivity induced by the hidden symmetry guarantees that the triply degenerate NPs can almost always emerge on the 4 lowest bands along $\Gamma-Z$. 
    In Fig.~\ref{parachange}, we use the  biaxial dielectrics to construct the AB-layer-stacked PhCs and fixed the values of $\varepsilon_{xx}$, $\varepsilon_{zz}$, and $\varepsilon_{xz}$ for the PhCs in all the panels, while we change the value of $\varepsilon_{yy}$ in each panel to show how the nodal structure changes with the parameter of the PhC and to demonstrate the robustness of NPs. We note that there are infinite NLs in the band structures, while only those connecting with the lowest triple NPs are plotted.

    In Fig.~\ref{parachange}a, we let $\varepsilon_{yy}<\varepsilon_{xx}$. As such, the asymptotic group velocity of $\hat{M}_y$-odd bands (light blue) is larger than that of $\hat{M}_y$-even bands (light red) in the $\Gamma-Z$ direction. So the lowest triply degenerate NPs (orange dots) are formed by the \nth{1} odd and \nth{2}, \nth{3} even bands at the intersections of two nodal rings (red and green) and the lowest Kramers-like NL (blue).

    If we increase the value of $\varepsilon_{yy}$ while keeping it less than $\varepsilon_{xx}$, the two nodal rings will grow bigger and will eventually intersect with their transnational counterparts at the fourfold degenerate NPs (purple dots) on the two boundaries of the BZ as shown in Fig.~\ref{parachange}b. As a result, the nodal rings in the extended BZ connect together and form a nodal chain in the $k_y=0$ plane.
    
    As $\varepsilon_{yy}$ increases, the eccentricity of the two nodal rings grows accordingly, and the two triple NPs move outward along the $z$ axis from the origin. When $\varepsilon_{yy}$ reaches the critical value $\varepsilon_{yy}=\varepsilon_{xx}$, the asymptotic group velocities of even and odd bands are accidentally identical in the $z$ direction. Then, the two NPs move to the infinity and vanish, and the two nodal rings are reduced to two straight lines parallel to  $\Gamma-Z$, as shown in Fig.~\ref{parachange}c. As we have emphasized in the main text, the two triple NPs in the lowest 4 bands can only disappear in this accidental condition of $\varepsilon_{yy}=\varepsilon_{xx}$. However, if we arbitrarily select the parameters of the PhCs respecting the symmetries, the probability of encountering these exceptional cases is zero, since they are restricted to a subset of measure zero for all possible parameters.
    
    Once $\varepsilon_{yy}$ is larger than $\varepsilon_{xx}$, Fig.~\ref{parachange}d demonstrates that the two triple NPs will reappear immediately. However, in this case, the two NPs are on the band crossings of \nth{1} even and \nth{2},\nth{3} odd bands, as the asymptotic group velocity of even bands surpasses that of odd bands. At the same time, Fig.~\ref{parachange}d shows that the two previous ring shaped nodal lines convert to two hyperbolas. 
    And the nodal hyperbolas intersect with their translational counterparts infinite times on the $\Gamma-Z$ line as well as on the two BZ boundaries, forming fourfold degenerate NPs (magenta dots and purple dots).

    In Fig.~\ref{quadruple}, we plot the band structure near a typical fourfold degenerate NP on the BZ boundary ($X-M$). As shown in Fig.~\ref{quadruple}a,b, there are 4 nodal rings intersect on the NPs. In the section of $k_x=\pi/L$, there are two doubly degenerate bands intersecting at the NP as displayed in Fig.~\ref{quadruple}c, resulting from the combined symmetry of time reversal and twofold screw rotation $\hat{\Theta}_{L/2}=\mathcal{T}\hat{S}_{2x}$. Furthermore, Fig.~\ref{quadruple}d,e show that both the band structure in the $k_z=k^\mathrm{NP}_z$ section and the iso-frequency surfaces at $\omega^\mathrm{NP}$ disperse as 2D double Dirac cones~\cite{sakoda2012Double}.

    \begin{figure}[hb!]
    \includegraphics[width=0.75\textwidth]{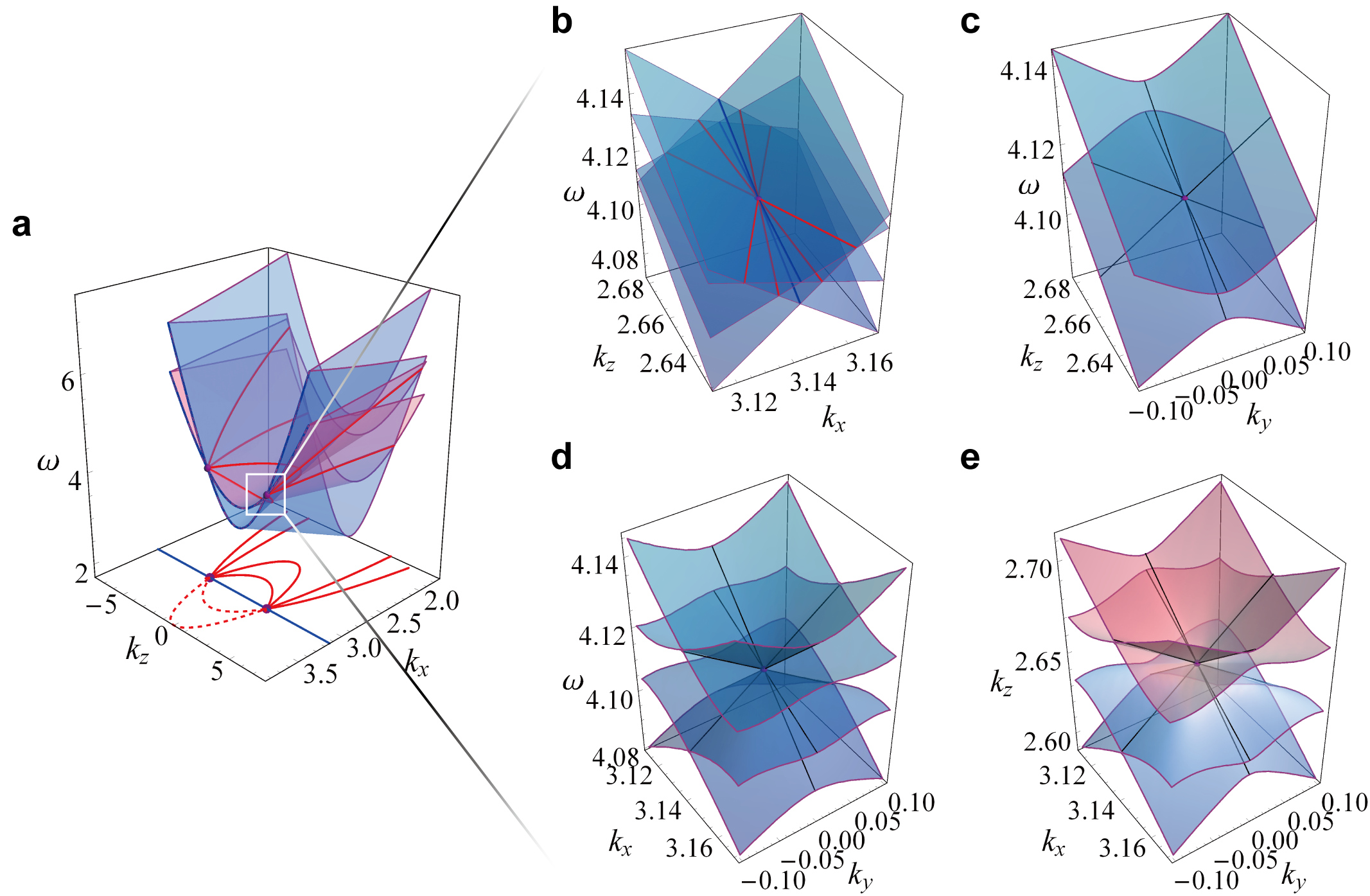}
    \caption{\label{quadruple}  Band structures near a fourfold degenerate nexus point in a AB-layer-stacked PhC with parameters $\varepsilon_1=\varepsilon_2=1$, $\varepsilon_3=12$ and $\theta=\pi/12$. \textbf{a}. Band structure near the nexus of 4 nodal lines on the $k_y=0$ plane. \textbf{b-d}, Zoomed in band structure near the NP in the sections of $k_y=0$, $k_x=\pi/L$, and $k_z=k^\mathrm{NP}_z$, respectively. \textbf{e}. Iso-frequency surfaces near the NP at the frequency of the NP.}
    \end{figure}

    \section{Derivation of $\mathbf{k}\cdot\mathbf{p}$ Hamiltonian for layer-stacked photonic crystals}
    Here we introduce a general $\mathbf{k}\cdot\mathbf{p}$ framework for a generic layer-stacked photonic crystal that is periodic in $x$ direction and homogeneous in the other two dimensions $\mathbf{r}_\perp=(y,z)$. In comparison with using wave equation (a 2-order PDE), it is more convenient to perform the derivation directly from the Maxwell's equations (1-order PDEs):
    \begin{equation}\label{maxwell2}
       \underbrace{\begin{pmatrix}
        0 & i\nabla\times\\
        -i\nabla\times & 0
        \end{pmatrix}}_{\displaystyle\hat{\mathcal{N}}}
        \underbrace{\begin{pmatrix}
        \mathbf{E}\\ \mathbf{H}
        \end{pmatrix}}_{\displaystyle\Psi}
        =\omega \underbrace{\begin{pmatrix}
        \tensor{\varepsilon}(x) & \tensor{\chi}(x)\\
        \tensor{\chi}(x)^\dagger & \tensor{\mu}(x)
        \end{pmatrix}}_{\displaystyle\hat{\mathcal{M}}}
        \begin{pmatrix}
        \mathbf{E}\\ \mathbf{H}
        \end{pmatrix},
    \end{equation}
    Expending of the Bloch state ${\Psi _{n,\mathbf{k}}}\left( \mathbf{r} \right) = e^{i\mathbf{k} \cdot \mathbf{r}} u_{n,\mathbf{k}}\left( x \right) = e^{i\mathbf{k}_\perp \cdot \mathbf{r}_\perp}\widetilde{\Psi}_{n,\mathbf{k}}\left( x \right)$ using the states at $\mathbf{k}_0$ (usually a degenerate point), we obtain 
    \begin{equation}
        \Psi_{n,\mathbf{k}} = \sum_m A_{nm}(\mathbf{k})e^{i(\mathbf{k} - \mathbf{k}_0)\cdot \mathbf{r}}\Psi _{m,\mathbf{k}_0}.
    \end{equation}
    Substitution of the expansion into Eq.~\eqref{maxwell2} yields
    \begin{equation}\label{kdotp2}
    \begin{split}
    0 &= \left( \hat{\mathcal{N}} - \omega_{n,\mathbf{k}}\hat{\mathcal{M}} \right)\Psi _{n\mathbf{k}} = \sum_m \left[ \hat{\mathcal{N}}e^{i(\mathbf{k} - \mathbf{k}_0)\cdot\mathbf{r}}\Psi _{m,\mathbf{k}_0} - \omega_{n,\mathbf{k}}\hat{\mathcal{M}}e^{i(\mathbf{k} - \mathbf{k}_0)\cdot\mathbf{r}}\Psi_{m,{\mathbf{k}_0}} \right]A_{nm} \\
     &= e^{i(\mathbf{k} - \mathbf{k}_0)\cdot\mathbf{r}}\sum_m \left[ 
     \begin{pmatrix}
        0 & (\mathbf{k}_0-\mathbf{k})\times\\
        -(\mathbf{k}_0-\mathbf{k})\times & 0
        \end{pmatrix}
     + \left(\omega_{m,\mathbf{k}_0} - \omega_{n,\mathbf{k}} \right)\hat{\mathcal{M}} \right]\Psi_{m,\mathbf{k}_0}A_{nm}. 
    \end{split}
    \end{equation}
    Since the Brillouin zone of a layer-stacked PhC is infinite in the transverse plane, the orthogonality of the Bloch states should be written as
    \begin{equation}
    \left\langle \Psi_{n,\mathbf{k}} \right|\hat{\mathcal{M}}\left| \Psi_{m,\mathbf{k}'} \right\rangle  = \frac{1}{( 2\pi)^2}\int_{ -\infty }^\infty  e^{i\left( \mathbf{k}'_\perp - \mathbf{k}_\perp \right) \cdot\mathbf{r}_\perp }d\mathbf{ r}_\perp   \int_{-L/2}^{L/2} dx\widetilde{\Psi}_{n,\mathbf{k}}^\dagger\hat{\mathcal{M}}\widetilde{\Psi}_{m,\mathbf{k}'}  = \delta( \mathbf{k}'_\perp - \mathbf{k}_\perp)\delta _{nm}\delta _{k_x,k'_x},
    \end{equation}
    where $\mathbf{k}_\bot$ and $\mathbf{r}_\bot$ are transverse wavevector and position vector respectively, $\delta( \mathbf{k}'_\perp - \mathbf{k}_\perp)$ represents Dirac delta function, while the other two delta functions are Kronecker delta. Performing inner product between $\Psi_{m',\mathbf{k}_0}$ and Eq.~\eqref{kdotp2}, we obtain the $\mathbf{k}\cdot\mathbf{p}$ eigen-equation of the \nth{1}-order
    \begin{equation}
        \sum_m \left[ \int_{-L/2}^{L/2} dx\Psi_{m',\mathbf{k}_0}^\dagger 
        \begin{pmatrix}
        0 & (\mathbf{k}_0-\mathbf{k})\times\\
        -(\mathbf{k}_0-\mathbf{k})\times & 0
        \end{pmatrix}
        \Psi_{m,\mathbf{k}_0} + \left( \omega_{m,\mathbf{k}_0} - \omega_{n,\mathbf{k}} \right)\delta_{m'm} \right] A_{nm} = 0,
    \end{equation}
    which can be rewritten as
    \begin{equation}
    \sum\limits_m {{H_{m'm}}} {A_{nm}} = \left( \omega_{n,\mathbf{k}} - \omega_{m',\mathbf{k}_0} \right)A_{nm'},
    \end{equation}
    with the element of the \nth{1}-order $\mathbf{k} \cdot \mathbf{p}$ Hamiltonian
    \begin{equation}\label{kpH}
     H_{m'm}=(\mathbf{k} - \mathbf{k}_0) \cdot \int_{-L/2}^{L/2} dx \left[ \mathbf{E}_{m,\mathbf{k}_0} \times \mathbf{H}_{m',\mathbf{k}_0}^* - \mathbf{H}_{m,\mathbf{k}_0} \times \mathbf{E}_{m',\mathbf{k}_0}^* \right]=(\mathbf{k} - \mathbf{k}_0) \cdot \mathbf{p}_{m'm}. 
    \end{equation}

    \subsection{$\mathbf{k}\cdot\mathbf{p}$ Hamiltonian near the nexus points}
    At the triply degenerate NPs with $\mathbf{k}^{\mathrm{NP}_\pm}=(0,0,\pm\frac{2\pi}{L}\sqrt{\frac{\varepsilon_{xx}}{\varepsilon_{yy}-\varepsilon_{xx}}})$ and $\omega^\mathbf{NP}=\frac{2c\pi}{L\sqrt{\varepsilon_{yy}-\varepsilon_{xx}}}$ for the AB-layer-stacked PhC in Fig.~1 of the main text, the three degenerate eigenstates can be obtained from Eqs.~\eqref{eigenfieldodd} and \eqref{eigenfieldeven}:
    \begin{align}\label{NPfield1}
    \Psi^\mathrm{odd}_{1}(\mathbf{k}^{\mathrm{NP}_\pm})&=(\mathbf{E}_1,\mathbf{H}_1)= \frac{1}{\sqrt{2 L}}\left(0,1,0,\mp\sqrt{\frac{\varepsilon_{xx}}{\varepsilon_{yy}}},0,\sqrt{\frac{\varepsilon_{yy}-\varepsilon_{xx}}{\varepsilon_{yy}}}\right)^\intercal \exp\left[i\left(\frac{2\pi}{L}x+ k^{\mathrm{NP}_\pm}_{z}z\right)\right],\\
    \Psi^\mathrm{even}_{0}(\mathbf{k}^{\mathrm{NP}_\pm})&=(\mathbf{E}_2,\mathbf{H}_2)=  \frac{1}{\sqrt{2L}}\left(\frac{ 1}{\varepsilon_{xx}},0,0,0,1,0\right)^\intercal \exp\left[ik^{\mathrm{NP}_\pm}_{z}\left( -\frac{\varepsilon_{xz}}{\varepsilon_{xx}}x+z\right)\right],\\
    \Psi^\mathrm{odd}_{-1}(\mathbf{k}^{\mathrm{NP}_\pm})&=(\mathbf{E}_3,\mathbf{H}_3)= \frac{1}{\sqrt{2 L}}\left(0,1,0,\mp\sqrt{\frac{\varepsilon_{xx}}{\varepsilon_{yy}}},0,-\sqrt{\frac{\varepsilon_{yy}-\varepsilon_{xx}}{\varepsilon_{yy}}}\right)^\intercal \exp\left[i\left(-\frac{2\pi}{L}x+ k^{\mathrm{NP}_\pm}_{z}z\right)\right].\label{NPfield3}
    \end{align}
    Substituting Eqs.~\eqref{NPfield1}-\eqref{NPfield3} into Eq.~\eqref{kpH}, we then obtain the \nth{1}-order $\mathbf{k}\cdot\mathbf{p}$ Hamiltonian at the NPs:
    \begin{equation}
    \hat{H}^{\pm\prime}_\mathrm{NP}=
    \begin{pmatrix}
        v_x\delta k_x\pm v^\mathrm{odd}_z\delta k_z & \displaystyle\frac{-i\,v^*_{y0}}{\sqrt{2}}\delta k_y & 0 \\
        \displaystyle\frac{i\,v_{y0}}{\sqrt{2}}\delta k_y & \pm v^\mathrm{even}_z\delta k_z & \displaystyle\frac{-i\,v_{y0}}{\sqrt{2}}\delta k_y\\
        0 & \displaystyle\frac{i\,v^*_{y0}}{\sqrt{2}}\delta k_y & -v_x\delta k_x\pm v^\mathrm{odd}_z\delta k_z
        \end{pmatrix},
    \end{equation}
    where  $\delta\mathbf{k}=(\delta k_x,\delta k_y,\delta k_z)=\mathbf{k}-\mathbf{k}^{\mathrm{NP}_\pm}$, and
    \begin{equation}
        {v}_x=\frac{\sqrt{\varepsilon_{yy}-\varepsilon_{xx}}}{\varepsilon_{yy}},\quad
        {v}_{y0}=\frac{\varepsilon_{xz}( \varepsilon _{yy} - \varepsilon_{xx} )}{\pi \sqrt{2\varepsilon_{yy}} ( {\varepsilon_{xz}^2} - \varepsilon _{xx}\varepsilon _{yy} + \varepsilon _{xx}^2)}\left( 1 + \exp\left({\textstyle \frac{\mp i\varepsilon_{xz}\pi}{\sqrt{\varepsilon _{xx}(\varepsilon_{yy}-\varepsilon_{xx})}}}\right) \right),\quad
        {v}^\mathrm{odd}_{z}=\frac{1}{\sqrt{\varepsilon_{xx}}},\quad
        {v}^\mathrm{even}_{z}=\frac{\sqrt{\varepsilon_{xx}}}{\varepsilon_{yy}}.
    \end{equation}
    Under the unitary transformation $\hat{V}=\mathrm{diag}(1,v^*_{y0}/|v_{y0}|,1)$, the Hamitonian of the NPs converts to
    \begin{equation}
    \begin{split}\label{Hnp}
       \hat{H}^\pm_\mathrm{NP}=\hat{V}\hat{H}^{\pm\prime}_\mathrm{NP}V^\dagger=&\begin{pmatrix}
        v_x\delta k_x\pm v^\mathrm{odd}_z\delta k_z & \displaystyle\frac{-i\,v_y}{\sqrt{2}}\delta k_y & 0 \\
        \displaystyle\frac{i\,v_y}{\sqrt{2}}\delta k_y & \pm v^\mathrm{even}_z\delta k_z & \displaystyle\frac{-i\,v_y}{\sqrt{2}}\delta k_y\\
        0 & \displaystyle\frac{i\,v_y}{\sqrt{2}}\delta k_y & -v_x\delta k_x\pm v^\mathrm{odd}_z\delta k_z
        \end{pmatrix}
    \end{split}
    \end{equation}
    with $v_y=|v_{y0}|$. In general, a $3\times3$ Hermitian matrix can be expanded by the 8 generators of $\mathsf{su}(3)$ Lie algebra. The 8 generators are usually selected as the 8 Gell-Mann matrices, or alternatively selected as the 3 spin-1 operators together with 5 spin-1 quadrupolar operators~\cite{hu2018Topological,toth2011Quadrupolar}. Here, in order to exhibit the relation between the nexus points and spin-1 physics, we adopt the latter choice, so Eq.~\eqref{Hnp} can be rewritten as
    \begin{equation}
            \hat{H}^\pm_\mathrm{NP}=v_x\hat{S}_z\delta k_x+v_y\hat{S}_y\delta k_y\pm\left[q_z\hat{Q}_{zz}+v_{z0}\hat{I}\right]\delta k_z,
    \end{equation}
    where $q_z=v^\mathrm{odd}_z-v^\mathrm{even}_z$, $v_{z0}=\frac{1}{3}(2v^\mathrm{odd}_z+v^\mathrm{even}_z)$, $\hat{S}_i$ ($i=x,y,z$) denote the 3 spin-1 operators, and $\hat{Q}_{zz}$ denotes one of the spin-1 quadrupolar operator~\cite{hu2018Topological,toth2011Quadrupolar}, which take the forms
    \begin{equation}
        \hat{S}_x=\frac{1}{\sqrt{2}}\begin{pmatrix}
    0&1&0\\
    1&0&1\\
    0&1&0\\
    \end{pmatrix},\quad
    \hat{S}_y=\frac{1}{\sqrt{2}}\begin{pmatrix}
    0&-i&0\\
    i&0&-i\\
    0&i&0\\
    \end{pmatrix},\quad
    \hat{S}_z=\begin{pmatrix}
    1&0&0\\
    0&0&0\\
    0&0&-1\\
    \end{pmatrix},\quad
    \hat{Q}_{zz}=(\hat{S}_z)^2-\frac{1}{3}\sum_{i}(\hat{S}_i)^2=
    \frac{1}{{3}}\begin{pmatrix}
    1&0&0\\
    0&-2&0\\
    0&0&1\\
    \end{pmatrix}.
    \end{equation}

    \section{Spin-1 conical diffraction}
    
    According to the $\mathbf{k}\cdot\mathbf{p}$ Hamiltonian near an NP, e.g. $\mathbf{k}^{\mathrm{NP}_+}$, the eigen-equation for states on the iso-frequency surface at the frequency $\omega^\mathrm{NP}$, \textit{i.e.} $\delta\omega=0$, reads
    \begin{equation}
    \begin{split}
       \underbrace{\begin{pmatrix}
        v_x\delta k_x & \displaystyle\frac{-i\,v_y}{\sqrt{2}}\delta k_y & 0 \\
        \displaystyle\frac{i\,v_y}{\sqrt{2}}\delta k_y & 0 & \displaystyle\frac{-i\,v_y}{\sqrt{2}}\delta k_y\\
        0 & \displaystyle\frac{i\,v_y}{\sqrt{2}}\delta k_y & -v_x\delta k_x
        \end{pmatrix}}_{\displaystyle \widetilde{H}(\delta\mathbf{k}_{xy})=v_x\hat{S}_z\delta k_x+{v}_y\hat{S}_y\delta k_y}\widetilde{\psi}=-\delta k_z
        \begin{pmatrix}
        v_z^\mathrm{odd} & 0 & 0\\
        0 & v_z^\mathrm{even} & 0\\
        0 & 0 & v_z^\mathrm{odd}
        \end{pmatrix}\widetilde{\psi}.
    \end{split}
    \end{equation}
    Using the transformation $\hat{R}=\mathrm{diag}(1,\sqrt{v_z^\mathrm{even}/v_z^\mathrm{odd}},1)$ and replacing $\delta k_z\to -i\partial_z$, we obtain the effective Schr\"odinger equation for the states on the iso-frequency surface:
    \begin{equation}
        i\,v^\mathrm{odd}_z \frac{\partial}{\partial z} \left|\psi\right>=\hat{H}(\delta\mathbf{k}_{xy})\left|\psi\right>,
    \end{equation}
    with the eigenstate $|\psi\rangle=\hat{R}\widetilde{\psi}$, and the effective anisotropic 2D spin-1 Hamiltonian:
    \begin{equation}\label{2D spin1 hamiltonian}
        \hat{H}(\delta\mathbf{k}_{xy})=\hat{R}^{-1}\widetilde{H}(\delta\mathbf{k}_{xy})\hat{R}^{-1}=
        \begin{pmatrix}
            v_x\delta k_x & \displaystyle\frac{-i\,\tilde{v}_y}{\sqrt{2}}\delta k_y & 0 \\
        \displaystyle\frac{i\,\tilde{v}_y}{\sqrt{2}}\delta k_y & 0 & \displaystyle\frac{-i\,\tilde{v}_y}{\sqrt{2}}\delta k_y\\
        0 & \displaystyle\frac{i\,\tilde{v}_y}{\sqrt{2}}\delta k_y & -v_x\delta k_x
        \end{pmatrix}
        =v_x\hat{S}_z\delta k_x+\tilde{v}_y\hat{S}_y\delta k_y,
    \end{equation}
    where $\tilde{v}_y=\sqrt{v_z^\mathrm{even}/v_z^\mathrm{odd}}v_y=\sqrt{\varepsilon_{xx}/\varepsilon_{yy}}v_y$. 
    In addition, we can introduce the spin-1 ladder operators in $\hat{S}_x$ representation:
    \begin{equation}
    \hat{S}_{\pm}=\hat{S}_z\mp i\hat{S}_y=\begin{pmatrix}
    1&\mp 1/\sqrt{2}&0\\
    \pm 1/\sqrt{2}&0&\mp 1/\sqrt{2}\\
    0&\pm 1/\sqrt{2}&1
    \end{pmatrix},    
    \end{equation}
    which raises and lowers a spin quantum number of the eigenstates of $\hat{S}_x$, respectively,
    \begin{equation}
        \hat{S}_\pm\left|s\right>=\sqrt{2-s(s\pm1)}\left|s\pm1\right>,
    \end{equation}
    where $\left|s\right>$ represent an eigenstate of $\hat{S}_x$  with spin quantum number (eigenvalue) $s$, \textit{i.e.} $\hat{S}_x|s\rangle=s|s\rangle\ (s\in\{-1,0,1\}$). As such, the effective spin-1 Hamiltonian~\eqref{2D spin1 hamiltonian} can be rewritten using the ladder operators:
    \begin{equation}
        \hat{H}(\delta\mathbf{k}_{xy}) = \frac{\delta\tilde{k}(\delta\mathbf{k}_{xy})}{2}\left( e^{i\phi(\delta\mathbf{k}_{xy}) }{\hat{S}_+ } + e^{-i\phi(\delta\mathbf{k}_{xy})}\hat{S}_- \right),
    \end{equation}
    with $\delta\tilde{k}(\delta\mathbf{k}_{xy})$ and $\phi(\delta\mathbf{k}_{xy})$ denoting the modulus and argument of $\delta\tilde{k}e^{i\phi}=v_x\delta k_x+i\tilde{v}_y\delta k_y$. Moreover, the evolution of a state along the $z$ axis can be explicitly expressed using the evolution operator:
    \begin{equation}
      \left|\psi(z)\right\rangle  = \exp\left[\frac{-iz}{v^\mathrm{odd}_z}\hat{H}(\delta\mathbf{k}_{xy}) \right]\left|\psi_0\right\rangle  = \left\{ I + \sum\limits_{n = 1}^\infty  \frac{1}{n!}\left[  \frac{-iz}{v^\mathrm{odd}_z}\frac{\delta\tilde{k}}{2}\left( e^{i\phi}\hat{S}_+  + e^{ - i\phi }\hat{S}_-  \right) \right]^n  \right\}\left|\psi_0 \right\rangle  
    \end{equation}
    where $|\psi_0\rangle$ represents the input state at $z=0$. 
    Using the ladder operators, we can obtain the following formulas:
    \begin{equation}
         \hat{h}^{n}|0\rangle=\left\{
         \begin{aligned}
         &\frac{2^n}{\sqrt{2}}\left(e^{i\phi}|1\rangle+e^{-i\phi}|-1\rangle\right),  &(n\in\text{odd})\\
         &2^{n} |0\rangle,  &(n\in\text{even}),
         \end{aligned}
         \right.
         \qquad\text{and}\qquad
         \hat{h}^n|\pm1\rangle=\sqrt{2}e^{\mp i\phi}\left(\hat{h}^{n-1}|0\rangle\right),
    \end{equation}
    with $\hat{h}=e^{i\phi}\hat{S}_++e^{-i\phi}\hat{S}_-$. Using these formulas, we can derive the final states evolving from different eigenstates of $\hat{S}_x$ as input:
    \begin{equation}
    \begin{split}
    \exp\left[\frac{-iz}{v^\mathrm{odd}_z}\hat{H}(\delta\mathbf{k}_{xy}) \right]\left| 1 \right> & =
    \frac{1}{2}\left[ {\cos\left({\textstyle\frac{\delta\tilde{k}\,z}{v^\mathrm{odd}_z}}\right) + 1} \right]\left| 1 \right\rangle  - \frac{i}{{\sqrt 2 }}\sin\left({\textstyle\frac{\delta\tilde{k}\,z}{v^\mathrm{odd}_z}}\right){e^{-i\phi }}\left| 0 \right\rangle  + \frac{1}{2}\left[ {\cos\left({\textstyle\frac{\delta\tilde{k}\,z}{v^\mathrm{odd}_z}}\right) - 1} \right]{e^{-2i\phi }}\left| { - 1} \right\rangle,\\ 
    \exp\left[\frac{-iz}{v^\mathrm{odd}_z}\hat{H}(\delta\mathbf{k}_{xy}) \right]\left| 0 \right\rangle & = \cos\left({\textstyle\frac{\delta\tilde{k}\,z}{v^\mathrm{odd}_z}}\right)\left| 0 \right\rangle  - \frac{i}{{\sqrt 2 }}\sin\left({\textstyle\frac{\delta\tilde{k}\,z}{v^\mathrm{odd}_z}}\right)\left( {{e^{i\phi }}\left| 1 \right\rangle  + {e^{ - i\phi }}\left| { - 1} \right\rangle } \right),\\
    \exp\left[\frac{-iz}{v^\mathrm{odd}_z}\hat{H}(\delta\mathbf{k}_{xy}) \right]\left| { - 1} \right\rangle  & = \frac{1}{2}\left[ {\cos\left({\textstyle\frac{\delta\tilde{k}\,z}{v^\mathrm{odd}_z}}\right) + 1} \right]\left| { - 1} \right\rangle  - \frac{i}{{\sqrt 2 }}\sin\left({\textstyle\frac{\delta\tilde{k}\,z}{v^\mathrm{odd}_z}}\right){e^{i\phi }}\left| 0 \right\rangle  + \frac{1}{2}\left[ {\cos\left({\textstyle\frac{\delta\tilde{k}\,z}{v^\mathrm{odd}_z}}\right) - 1} \right]{e^{2i\phi }}\left| 1 \right\rangle.
    \end{split}
    \end{equation}
    If we project the final state onto an spin eigenstate $\left|s_f\right>$ of $\hat{S}_x$, we obtain the compact form of the transition amplitude from an input spin state $|s_i\rangle$ to the output state $|s_f\rangle$:
    \begin{equation}
    \big\langle {{s_f}} \big|\exp\left[\frac{-iz}{v^\mathrm{odd}_z}\hat{H}(\delta\mathbf{k}_{xy})\right]\big|s_i\big\rangle  = \exp\left[i\left(s_f-s_i\right)\phi(\delta\mathbf{k}_{xy})\right]\left(\frac{1}{2}\right)^{\frac{s_f^2 + s_i^2}{2}}\left[ {{i^{\left| {{s_f} - {s_i}} \right|}}\cos \left( {\frac{\delta\tilde{k}(\delta\mathbf{k}_{xy})}{v^\mathrm{odd}_z}z + \frac{\pi}{2}\left| {{s_f} - {s_i}} \right|} \right) + {s_i}{s_f}} \right].
    \end{equation}
    The result reveals that the phase of the output field winds $l=(s_f-s_i)$ times around $\delta\mathbf{k}_{xy}=0$. Correspondingly, the phase on the ring of conical diffraction in the real space forms an optical vortex carrying the charge $l=(s_f-s_i)\in\{0,\pm1,\pm2\}$. 
    
    Indeed, the equality of the charge of the generated optical vortex and the difference of the spin quantum number of finial and initial states reflects the conservation of the generalized total angular momentum during the diffraction process. Since the effective 2D spin-1 Hamiltonian~\eqref{2D spin1 hamiltonian} has anisotropic Fermi velocity in the $xy$ plane, the total angular momentum is not a conserved quantity of the Hamiltonian. Nevertheless, we can rewrite the Hamiltonian as
    \begin{equation}
        \hat{H}(\delta\mathbf{k}_{xy})=v_x\hat{S}^z\delta k_x+\tilde{v}_y\hat{S}^y\delta k_y={e_a}^i\hat{S}^a\delta k_i,
        \qquad (a=z,y,\ i=x,y)
    \end{equation}
    with regarding the anisotropic Fermi velocity tensor as the tetard  $\big({e_a}^i\big)=\tensor{v}_f=\mathrm{diag}(v_x,\tilde{v}_y)$ of an anisotropic space. The corresponding metric tensor of the space is given by $\big(g^{ij}\big)=\big(\delta^{ab}{e_a}^i{e_b}^j\big)=\mathrm{diag}(v_x^2,\tilde{v}_y^2)$. Then we can define a generalized orbital angular momentum operator for the anisotropic space:
    \begin{equation}
        \widetilde{L}_z=\frac{1}{\sqrt{\det(g^{ij})}}\epsilon_{zij}r^i\delta k^j=\frac{1}{\sqrt{\det(g^{ij})}}\epsilon_{zij}r^i g^{jl}\delta k_l=\frac{\tilde{v}_y}{v_x}x\delta k_y-\frac{v_x}{\tilde{v}_y}y\delta k_x.
    \end{equation}
    where $\epsilon_{zij}$ denotes the antisymmetric symbol, $(r^i)=(x,y)$. 
    In terms of the coordinate transformation $\delta\tilde{k}e^{i\phi}=v_x\delta k_x+i\tilde{v}_y\delta k_y$ where $\tilde{k}$,$\phi$ can be viewed as the generalized polar coordinates in the momentum space, we have
    \begin{equation}
        \frac{\partial}{\partial\phi}=\frac{\partial\,\delta k_x}{\partial\phi}\frac{\partial}{\partial\delta k_x}+\frac{\partial\,\delta k_y}{\partial\phi}\frac{\partial}{\partial\delta k_y}
        =\left(\frac{-\tilde{v}_y}{v_x}\delta k_y\right)\frac{\partial}{\partial\delta k_x}
        +\left(\frac{v_x}{\tilde{v}_y}\delta k_x\right)\frac{\partial}{\partial\delta k_y}
        =i\frac{\tilde{v}_y}{v_x}\left(i\frac{\partial}{\partial\delta k_x}\right)\delta k_y
        -i\frac{v_x}{\tilde{v}_y}\left(i\frac{\partial}{\partial\delta k_y}\right)\delta k_x.
    \end{equation}
    Therefore, in the momentum representation, the generalized orbital angular momentum operator can be alternatively expressed as
    \begin{equation}
        \widetilde{L}_z=\frac{\tilde{v}_y}{v_x}x\delta k_y-\frac{v_x}{\tilde{v}_y}y\delta k_x=-i\frac{\partial}{\partial\phi}.
    \end{equation}
    Thus the eigenstates of the generalized orbital angular momentum operator are exactly given by $\psi_l=\exp\left[il\phi(\delta{\mathbf{k}}_{xy})\right]$ with $l$ denoting the orbital angular quantum number:
    \begin{equation}
         \widetilde{L}_z\exp\left[il\phi(\delta{\mathbf{k}}_{xy})\right]=l\exp\left[il\phi(\delta{\mathbf{k}}_{xy})\right].
    \end{equation}
    In addition, the generalized total angular momentum operator is given by
    \begin{equation}
        \widetilde{J}=\widetilde{L}_z-\hat{S}_x,
    \end{equation}
    where the negative sign in font of $\hat{S}_x$ is utilized for the sake of the consistent chirality of the spin frame $(\hat{S}_z,\hat{S}_y,-\hat{S}_x)$ and of the coordinate frame $(x,y,z)$. Then it can be directly verified that the generalized total angular momentum commutes with the Hamiltonian:
    \begin{equation}
        \big[\widetilde{J},\hat{H}(\delta\mathbf{k}_{xy})\big]=0.
    \end{equation}
    Therefore, the generalized total angular momentum quantum number $j=l-s$ is conserved during the wave propagation along the $z$ direction. For an incident state with $l_i=0$ orbital angular momentum and $s_i$ spin momentum, the output state satisfies $j=l_f-s_f=0-s_i$, thus the final orbital quantum number $l_f=s_f-s_i$ is determined by the difference of the finial and initial spin quantum numbers.

\bibliographystyle{naturemag}
\bibliography{references}